\newtheorem{theorem}{Theorem}[section]
\newtheorem{proposition}[theorem]{Proposition}
\newtheorem{lemma}[theorem]{Lemma}
\theoremstyle{definition}
\newtheorem{definition}[theorem]{Definition}
\newtheorem{assumption}[theorem]{Assumption}
\newtheorem{remark}[theorem]{Remark}
\newtheorem{example}[theorem]{Example}
\newcommand{\R}{\mathbb{R}}
\newcommand{\E}{\mathbb{E}}
\renewcommand{\S}{\mathbb{S}}
\renewcommand{\P}{\mathbb{P}}
\newcommand{\FF}{\mathbb{F}}
\newcommand{\II}{\mathcal{I}}
\newcommand{\Filt}{\mathcal{F}}
\newcommand{\Mset}{\mathcal{M}}
\newcommand{\Tset}{\mathcal{T}}
\newcommand{\Borel}{\mathcal{B}}
\newcommand{\XX}{\mathcal{X}}
\newcommand{\cG}{\mathcal{G}}
\newcommand{\Pset}{\mathcal{P}}
\newcommand{\Qset}{\mathcal{Q}}
\newcommand{\esssup}{\operatornamewithlimits{ess\,sup}}
\let\inf\relax \DeclareMathOperator*\inf{\vphantom{p}inf}
\newcommand\I{\mathds{1}}
\newcommand{\norm}[1]{\left\|#1\right\|}
\newcommand{\as}{\text{a.s.}}
\newcommand{\qs}{\text{q.s.}}
\newcommand{\tr}{\operatorname{tr}}
\newcommand{\Dt}{\mathcal{D}_t}
\newcommand{\Dx}{\nabla_x}
\newcommand{\Dxx}{\nabla_x^2}
\newcommand{\Dth}{\nabla_\theta}
\newcommand{\bcdot}{\boldsymbol{\cdot}}
\title{Robust Pricing and Hedging of American Options in Continuous Time}
\author[1,2]{Ivan Guo}
\author[3]{Jan Ob\l\'oj\thanks{Ivan Guo’s work was partially supported by the Australian Research Council (Grant DP220103106). We are grateful to Anna Aksamit, Libo Li and Gregoire Loeper for fruitful discussions. Our sincere thanks go to Vlad Tuchilus for his help with editing the manuscript.\\ For open access purposes, the authors have applied a CC BY public copyright licence to any author accepted manuscript version arising from this submission.}}
\date{\today}
\affil[1]{School of Mathematics, Monash University}
\affil[2]{Centre for Quantitative Finance and Investment Strategies, Monash University}
\affil[3]{Mathematical Institute, University of Oxford}
\begin{document}
\maketitle

\begin{abstract}
 We consider the robust pricing and hedging of American options in a continuous time setting. We assume asset prices are continuous semimartingales, but we allow for general model uncertainty specification via adapted closed convex constraints on the volatility. We prove the robust pricing-hedging duality. When European options with given prices are available for static trading, we show that duality holds against richer models where these options are traded dynamically. Our proofs rely on probabilistic treatment of randomised stopping times and suitable measure decoupling, and on optimal transport duality. In addition, similarly to the approach of \cite{aksamit2019robust} in discrete time, we identify American options with European options on an enlarged space. 
\bigskip

\noindent\textbf{Keywords:} robust pricing and hedging, American options, duality, optimal transport, volatility restrictions, Az\'ema supermartingale
\end{abstract}

\section{Introduction}
\label{sec:intro}

Option pricing is at the origins of modern quantitative finance, with the Black-Scholes formula one of its most iconic manifestations and often billed as one of the equations which changed the world, see \cite{IanStewart17}. The intuition behind it, going back to the seminal works of \cite{BlackScholes,Merton}, links pricing to hedging: if a hedging strategy can replicate the option's cashflow, then the two should have the same initial price, or else markets would admit plain arbitrage opportunities. In some probabilistic models for the underlying's dynamics, such as the Black-Scholes model, we can compute this price by taking risk-neutral expectations. In more general, incomplete models a perfect hedging strategy may not exist but the principle generalises to the so-called pricing-hedging duality: the cost of the cheapest superhedging strategy is equal to the supremum of the payoff's expectations under all risk-neutral measures, see \cite[Chp.~7]{FollmerSchied2nd} \cite[Chp.~5]{KaratzasShreve1998}.

The above classical results require us to fix a probabilistic model for the dynamics of the risky assets. This perspective, while allowing for an efficient description of the risk within the chosen model, is silent on the errors related to the model choice itself, errors often referred to as the \emph{Knightian uncertainty} after \cite{Knight}, \emph{model ambiguity or misspecification} or \emph{epistemic uncertainty}, see \cite{HansenMarinacci,WalleyImpreciseProbabilities}. 
This shortcoming became a focal point after the 2008 global financial crisis and led to the development of \emph{robust methods} in mathematical finance. Rich literature emerged tackling the foundational questions on how to best capture, mathematically speaking, diversity of models, how to properly define an arbitrage opportunity if a single model is not specified and to obtain a \emph{robust fundamental theorem of asset pricing} that would characterise absence of arbitrage. And how to formulate, and prove, a \emph{robust pricing-hedging duality}. In discrete time settings, these questions were answered using two points of view. One, the so-called \emph{quasi-sure} perspective, relied on families of probability measures which, while potentially mutually singular, have enough structure to obtain a complete set of results, see \cite{BN} and the references therein. The other point of view adopted a pointwise (or pathwise), $\omega$ by $\omega$, also leading to complete robust equivalent of the classical results, see \cite{BFMOZ} and the references therein. The two approaches are naturally just different ways to see the problem and can be shown to be equivalent, see \cite{OWUnified}. 
Analogous efforts were undertaken in continuous time settings, with quasi-sure analysis advanced in, e.g, \cite{DenisMartini,NeufeldNutz2013, PossamaiRoyerTouzi} and pathwise arguments used by, e.g, \cite{Mykland, VovkFS, Allan2023, dolinsky2014martingale, OblojZhou}. The formidable technical challenge in continuous time relates to the necessity to consider suitable classes of admissible trading strategies and to define their outcomes, given by stochastic integrals, simultaneously under many different, often singular, probability measures. To the best of our knowledge, while a lot of progress has been done and many partial exciting results exist, in contrast to the discrete time setting, in continuous time no complete robust theory has been obtained. Our work contributes to the ongoing development of robust mathematical finance in continuous time.

In the context of robust pricing and hedging, if no modelling assumptions are made, the range of possible no-arbitrage prices for a given contingent claim can be impractically large. A natural way to narrow it is to use available market information. In this way, arbitrary constraints coming from modeller's choices are replaced with observable market prices. The latter would be taken into the account in the classical approach via a reverse-engineering process, in which a particular model is selected, say from a parametric class of models, to best fit observed market prices via \emph{calibration}. In the robust approach instead, such market prices are added as constraints to the original problem. This perspective in fact was behind the foundational works of \cite{Hobson, HobsonRogers} which solved robust pricing and hedging of lookback and barrier options using probabilistic methods of Skorokhod embeddings, see \cite{OblojSEP, HobsonSurvey, CoxObloj}, based on the classical observation of \cite{BreedenLitzenberger} that incorporating market prices of call options with a maturity $T$ is equivalent to fixing the marginal distribution at $T$ under the risk-neutral measure. In subsequent seminal contributions \cite{BHLP,GHLT} re-wrote the robust pricing and hedging problem as an optimal transport (OT) problem with a martingale constraint, i.e., a \emph{martingale optimal transport} (MOT) problem. These ideas brought rich and powerful techniques of OT to robust mathematical finance, allowing in particular to develop numerical methods and perform empirical studies for robust methods, see \cite{GuoObloj,EcksteinGuoLimObloj}, and obtain robust pricing-hedging duality results under general constraints, see \cite{GuoTanTouzi, guo2021path}.

Interestingly, as first observed by \cite{Neuberger}, these ideas were not sufficient to consider robust pricing and hedging for American options, and their naive extension could lead to a gap in the pricing-hedging duality. Instead, a weak formulation was suggested by \cite{Neuberger} and further explored in \cite{HobsonNeuberger}. Another technical solution, via randomised models, was proposed by \cite{BayraktarHuangZhou, BayraktarZhou2017}. A more comprehensive explanation was proposed by \cite{aksamit2019robust} who analysed the problem through the lens of information --- while stock prices are observed, and traded on, dynamically, the constraining market option prices are given statically at time $t=0$ and not updated afterwards, limiting the class of exercise policies for the American option holders. They linked duality to a robust version of the dynamic programming principle (DPP), used for classical American option pricing. This also explained why \cite{Dolinsky2014GameOptions}, who studied game options, which include American options, in a nondominated discrete time market, did not encounter duality gap issues since his setup had no statically traded options. \cite{aksamit2019robust} then showed that the pricing-hedging duality can be restored by considering a richer class of models where options are traded dynamically as well. 

Our work makes a decisive new contribution to this literature by developing robust pricing and hedging duality for American options in continuous time. We  show that $\pi^A_{g}$, the superhedging price of an American option $Z$ using dynamic trading in the stocks and static trading in European options with payoffs $g$, is equal to 
$$ \sup_{\widehat\P\in \widehat\Qset, \widehat\tau\in\widehat\Tset^{\widehat\P}} \E^{\widehat\P} Z_{\widehat\tau}, 
$$
where $\widehat \Qset$ corresponds to risk neutral measures for joint dynamics of stocks and options $g$, and the stopping times $\widehat \tau$ are allowed to depend on the information about both. The exact statement of that duality is given in Theorem \ref{thm:americanduality} and it allows to further restrict models to those with volatility taking values in pre-specified sets. These could be general, but could also be given by a point, in which way our results recover the to classical model-specific pricing-hedging duality for American options, as in \cite{Myneni}, as a special case. When no statically traded options are available, $g=0$, then we can restrict to martinagle measures $\Qset$ on $\Omega$ and exercise policies $\tau\in \Tset$. 
To prove the above duality we adopt ideas pioneered in \cite{aksamit2019robust} and show that an American option can be identified with a European option in an enlarged space $\bar\Omega$. However, unlike in \cite{aksamit2019robust}, to link the pricing in the two perspectives, we do not go via a dynamic programming representation but instead rely on probabilistic techniques related to randomised stopping times and Az\'ema supermartingales, see Theorem \ref{thm:americanconvexmtg} and other results in section \ref{sec:robustpricing}. Also on the hedging side, we encounter genuine novel difficulties in the continuous time setup. We lift models to allow for dynamic trading in the European options with given time zero market prices. However, while we work with continuous asset prices, we can not ensure that option prices, defined via conditional expectations, are continuous. In fact, they may be random at time $t=0$. To account for that, we have to consider fictitious markets with trading on $[-\delta, 1]$, see section \ref{sec:robusthedging}. This allows us to establish a chain of inequalities between potential superhedging and pricing valuations, see Proposition \ref{prop:middleinequalities}. The final piece of the jigsaw is to show the LHS and RHS of this long chain of inequalities are equal, which boils down to establishing pricing-hedging duality for European options on $\bar\Omega$. We do this in section \ref{sec:robustPHduality} using OT-duality methods, in analogy to \cite{guo2021path,tan2013optimal}.

\section{Preliminaries }\label{secprelim}


Let $\Omega:=D([0,1];\R^d)$ be the set of right-continuous and with left limits (rcll) paths and $X$ be the canonical process. For each $t\in[0,1]$, let $\Omega_t:=\{\omega_{\cdot\wedge t}:\omega\in\Omega\}$ be the set of paths stopped at time $t$ and  let $\Lambda:= \{(t,\omega): t\in[0,1], \omega\in \Omega_t\}$. 
Let $\FF=(\Filt_t)_{0\leq t\leq 1}$ be the canonical filtration generated by $X$. For any given probability measure $\P$ on $(\Omega,\FF)$, let $\FF^\P$ be the augmentation of $\FF$ with respect to $\P$. We denote by $\Tset$ (resp.\ by $\Tset^\P$) the set of $\FF$ (resp.\ $\FF^\P$) stopping times. 
The spaces $\Omega$ and $\Omega_t$ are equipped the with the norm $\norm{\omega}_\infty=\sup_{t\in[0,1]} |\omega_t|$, while $\Lambda$ is 
equipped with the metric $d_\infty((t,\omega), (t',\omega'))=|t-t'|+\norm{\omega_{\cdot\wedge t}-\omega'_{\cdot\wedge t'}}_\infty$. We note that a process $q$ on $\Omega$ is $\FF$--progressively measurable if and only if it is a measurable map from $\Lambda$ to $\R$.  


Given a Polish space $\XX$ equipped with its Borel $\sigma$-algebra, let $C(\XX)$ be the set of continuous functions on $\XX$, $C_b(\XX)$ be the set of bounded continuous functions and $\Mset(\XX)$ be the set of signed finite Borel measures on $\XX$.
On $C_b(\XX)$, let $\mathfrak{T}_k$ denote the topology of uniform convergence on compact sets of $\XX$. Denote by $\mathfrak{T}_t$ the finest locally convex topology on $C_b(\XX)$ which agrees with $\mathfrak{T}_k$ on closed balls of $C_b(\XX)$ (via the uniform norm). The topology $\mathfrak{T}_t$ was introduced by \cite{lecam1957convergence} and is also known as the ``mixed topology'' \cite{fremlin1972bounded} or the ``substrict topology'' \cite{sentilles1972bounded}. For this paper, we will make use the following key result:
\begin{proposition}[\cite{fremlin1972bounded,sentilles1972bounded}]
The $\mathfrak{T}_t$ dual of $C_b(\XX)$ can be identified with $C_b(\XX)^*=\Mset(\XX)$.
\end{proposition}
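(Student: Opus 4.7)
The identification $C_b(\XX)^*=\Mset(\XX)$ (with the dual taken under $\mathfrak{T}_t$) splits into two containments. My plan is to lean on the defining characterisation of $\mathfrak{T}_t$: a linear functional is $\mathfrak{T}_t$-continuous if and only if its restriction to every norm-closed ball of $C_b(\XX)$ is $\mathfrak{T}_k$-continuous. Every other step is driven by this reduction.

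For the easy inclusion $\Mset(\XX)\hookrightarrow C_b(\XX)^*$, I would show that any $\mu\in\Mset(\XX)$ yields a $\mathfrak{T}_t$-continuous functional $\Lambda_\mu\colon f\mapsto\int f\,d\mu$. Given a uniformly bounded net $(f_n)$ in some closed ball of $C_b(\XX)$ converging to $f$ in $\mathfrak{T}_k$, I would invoke tightness of $|\mu|$ on the Polish space $\XX$ (Ulam's theorem) to pick a compact $K\subset\XX$ with $|\mu|(\XX\setminus K)<\varepsilon$, and then estimate
\[
\left|\int(f_n-f)\,d\mu\right|\le |\mu|(\XX)\sup_K|f_n-f|+\varepsilon\bigl(\|f\|_\infty+\sup_n\|f_n\|_\infty\bigr),
\]
letting $n\to\infty$ followed by $\varepsilon\to 0$.

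For the harder inclusion, starting from a $\mathfrak{T}_t$-continuous $\Lambda$, I would first note that $\Lambda$ is norm-bounded, since the unit ball of $C_b(\XX)$ is $\mathfrak{T}_t$-bounded. Using the Riesz-space lattice structure of $C_b(\XX)$, I would Jordan-decompose $\Lambda=\Lambda^+-\Lambda^-$ into positive $\mathfrak{T}_t$-continuous functionals, and then represent each $\Lambda^\pm$ by a finite positive Borel measure. The crucial step is countable additivity: for $f_n\in C_b(\XX)$ with $f_n\downarrow 0$ pointwise and $\|f_n\|_\infty\le C$, Dini's theorem applied on each compact subset yields $f_n\to 0$ uniformly on compacts, hence $f_n\to 0$ in $\mathfrak{T}_k$ within a fixed ball, and the defining property of $\mathfrak{T}_t$ forces $\Lambda^+(f_n)\to 0$. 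This is precisely Daniell's continuity criterion, from which I would extract a $\sigma$-additive finite positive measure on the Baire $\sigma$-algebra generated by $C_b(\XX)$; because $\XX$ is metric, the Baire and Borel $\sigma$-algebras coincide, producing $\mu^\pm\in\Mset(\XX)$, and $\mu=\mu^+-\mu^-$ represents $\Lambda$.

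The main obstacle will be this passage from finite to countable additivity. A norm-continuous positive functional on $C_b(\XX)$ only determines a finitely additive set function in general (as in the Yosida-Hewitt decomposition), and it is exactly the extra continuity of $\mathfrak{T}_t$ beyond $\|\cdot\|_\infty$ that forces Daniell's criterion via Dini-on-compacts combined with the uniform bound on $(f_n)$. This is the technical heart of the argument due to \cite{fremlin1972bounded,sentilles1972bounded}, to whom one defers for the full proof.
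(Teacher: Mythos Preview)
The paper does not prove this proposition at all: it is stated as a result from \cite{fremlin1972bounded,sentilles1972bounded} and used as a black box. Your sketch is therefore not competing with any argument in the paper, and your closing line deferring to those references is exactly what the authors themselves do.

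That said, your outline is a correct rendition of the standard proof. Two small points are worth flagging. First, the claim that a $\mathfrak{T}_t$-continuous $\Lambda$ is norm-bounded is true, but your justification (``the unit ball is $\mathfrak{T}_t$-bounded'') tacitly uses that $\mathfrak{T}_t$-bounded sets coincide with norm-bounded sets, which is itself a nontrivial feature of the mixed topology and should be cited rather than asserted. Second, the Jordan decomposition step requires that $\Lambda^{\pm}$ remain $\mathfrak{T}_t$-continuous; this is not automatic for an arbitrary locally convex lattice, though it does hold here (one checks directly that if $0\le g_\alpha\le f_\alpha$ and $f_\alpha\to 0$ uniformly on compacts within a fixed ball, then $g_\alpha\to 0$ likewise, so $\Lambda^+(f_\alpha)\to 0$). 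With these two points granted, the Dini-on-compacts argument for Daniell's criterion is exactly the right mechanism, and the identification of Baire and Borel $\sigma$-algebras on a metric space closes the loop.
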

\begin{remark}
The choice of topology $\mathfrak{T}_t$ will allow the applications of our duality argument in non-locally compact settings, and to avoid the issue of $C_b(\Omega)^*$ being identified with the set of all regular, signed, finite and finitely additive Borel measures (\cite{dunford1958linear} Theorem IV.6.2) under the usual uniform norm topology.
\end{remark}
Let $\S^d$, $\S^d_+$ and $\S^d_{++}$ denote the sets of symmetric matrices, positive semidefinite matrices and positive definite matrices, respectively, with $a:b := \tr(a^\intercal b)$, for any $a,b\in \S^d$. 
We denote $L^\infty(\XX)$ the set of bounded measurable functions and $L^1(\XX,\mu)$  the set of $\mu$-integrable functions, where $\mu\in\Mset_+(\XX)\subset\Mset(\XX)$ is a positive measure on $\XX$. The respective vector valued version of such spaces are denoted in the natural way, e.g., $C_b(\XX; \R^m)$, $\Mset(\XX; \R^m)$, $L^1(\XX, \mu; \R^m)$ and so on.  In this paper, the typical choices of $\XX$ are $\Omega$, $\Lambda$, $\R^m$, their products, as well as their subspaces. We will also use the shorthand $\mu(f):=\int_\XX f \mu(dx)$. 
By convention, $X$ is a column vector but $q\in C_b(\Lambda; \R^d)$ is a row vector, with $qX$ denoting their scalar product.

Let $\Pset$ be the set of Borel probability measures on $(\Omega,\Filt_1)$. We will work with measures $\P$ under which $X$ is a semimartingale and we will need to define stochastic integrals $\int_0^{\cdot} q_t dX_t$ simultaneously for many such $\P$'s, which may well be mutually singular. Naturally, this requires some restrictions on the integrand $q$ and/or on the measures considered. We defer the details to section \ref{sec:robusthedging}. For now, we introduce the set $\Qset\subseteq\Pset$ of probability measures under which $X$ is a square integrable martingale.

\subsection{The enlarged space}
A key idea in the robust hedging of American options is to progressively encode the optimal stopping decision in the probability space and its filtration. It relies on observing that an American option in the original space can be, in a suitable sense, interpreted as a European option in an enlarged space, see \cite{ElKarouiTan, aksamit2019robust}. In this subsection, we will set up the appropriate enlarged space and establish the robust pricing-hedging duality for European options in this space.
Consider the set $\Theta\subset C([0,1],\R)$ defined by 
\[
\Theta:= \{\vartheta \in C([0,1],\R): \vartheta_t= \theta\wedge t, \text{ for some } \theta\in[0,1]\}.
\]
The space $\Theta$ is isometric to $[0,1]$, with an obvious bijection between $\theta\in[0,1]$ and $\vartheta\in\Theta$ via $\vartheta=\theta\wedge\cdot$ or $\theta=\vartheta_1$. 
When there's no confusion, we will simply use $\theta$ to denote elements of $\Theta$, while $\vartheta$ is reserved when it is necessary to refer to the whole path.
Define the product space $\bar\Omega:=\Theta \times \Omega$, which is a subset of $D([0,1],\R^{d+1})$ and inherits its topology. Also extend the canonical process via $\bar X_t(\bar\omega)=(\vartheta_t, X_t(\omega))$ and the canonical filtration to $\bar \FF$.
For each $\bar\omega=(\vartheta,\omega)\in\bar\Omega$, let $\bar\omega_{\cdot\wedge t}:=(\vartheta_{\cdot\wedge t},\omega_{\cdot\wedge t})$ denote the stopped path at time $t$. 
Analogous to $\Lambda$ from the original space, we define $\bar\Lambda := \{(t,\bar\omega_{\cdot\wedge t}): t\in[0,1], \bar\omega\in \bar\Omega\}$. As before, for convenience, we will often write elements of $\bar\Lambda$ as
$(t,\bar\omega_{\cdot\wedge t}) = (t,\vartheta_{\cdot\wedge t},\omega_{\cdot\wedge t})=(t,\theta\wedge t,\omega_{\cdot\wedge t})$.

Similar to before, let $\bar\Pset$ be the set of Borel probability measures on $\bar\Omega$ and let $\bar\Qset \subset \bar\Pset$ be the set of measures under which $X$ is a square integrable martingale.

\section{Robust pricing of American options}
\label{sec:robustpricing}

In this section, we represent the robust price of an American option as the robust price of a corresponding European option in the enlarged space. Put differently, we show that optimising linear functionals on $\Tset\times \Qset$ is equivalent to optimising on the enlarged space $\bar\Qset$. 
An American option payoff is given as a measurable function $Z\in L^0(\Lambda)$, i.e., if exercised at time $t$, the option pays $Z_t(\omega) = Z(t,\omega_{t\land \cdot})$, and such an option can be naturally also exercised at a stopping time $\tau$ and we write $Z_\tau$ for its payoff $Z(\tau(\omega),\omega_{\cdot \land \tau(\omega)})$. Note also that $Z$ naturally induces a random variable on $\bar\Omega$ via $\bar Z(\bar\omega) = Z(\theta,\omega_{\cdot \land \theta})$, i.e., the composition of $Z$ with the map 
$$\bar\Omega\ni \bar\omega=(\vartheta,\omega) \longrightarrow (\vartheta_1,\omega_{\cdot \land \vartheta_1})\in \Lambda,
$$
which is continuous with our choice of norms. 

For any $E=(E_1,E_2) \in \Borel(\Omega) \times \Borel(\Lambda)$,
define $\bar E_1 := \Theta \times E_1 \in \Borel(\bar\Omega)$ and 
\[
\bar E_2 = \{(t,\theta\wedge t, X_{\cdot\wedge t})\in \bar\Lambda : (t, X_{\cdot\wedge t})\in E_2\} \in \Borel(\bar\Lambda).
\]
Next, define
\begin{equation}
    \begin{split}\label{eq:QsetEdef}
        \Qset^E&=\{\P\in\Qset: \P(E_1)= \lambda\otimes\P (E_2)=1\},\\
\bar\Qset^E&=\{\bar\P\in\bar\Qset: \bar\P(\bar E_1)= \lambda \otimes \bar\P (\bar E_2)=1\},
    \end{split}
\end{equation}
where $\lambda$ is the Lebesgue measure, so $\lambda\otimes\P\in \Pset(\Lambda)$ and $\lambda \otimes \bar\P \in \Pset(\bar\Lambda)$.

We state now our main result:
\begin{theorem}\label{thm:americanconvexmtg} 
Let $E_1\in \Borel(\Omega)$ and $ E_2\in \Borel(\Lambda)$.
Then for any $Z\in L^0(\Lambda)$ and bounded from below, we have
\[
\sup_{ \P\in\Qset^E, \tau\in\Tset^\P} \E^\P Z_\tau = \sup_{\bar\P\in\bar\Qset^E} \E^{\bar\P} Z(\theta,\omega_{\cdot\land \theta}).
\]
\end{theorem}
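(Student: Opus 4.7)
The plan is to prove the two inequalities separately. The easier direction $\leq$ is an explicit pushforward: given $(\P,\tau)\in\Qset^E\times\Tset^\P$, the measurable map $\Phi:\Omega\to\bar\Omega$ defined by $\Phi(\omega)=(\tau(\omega)\wedge\cdot,\omega)$ yields $\bar\P:=\Phi_*\P\in\bar\Qset^E$. Indeed, $\bar\P(\bar E_1)=\P(E_1)=1$ and, using Fubini together with the definition of $\bar E_2$, $\lambda\otimes\bar\P(\bar E_2)=\lambda\otimes\P(E_2)=1$. The $\bar\FF$-martingale property of $X$ under $\bar\P$ reduces to the $\FF^\P$-martingale property under $\P$ because, for any $A\in\bar\Filt_s$, the pullback $\Phi^{-1}(A)\subset\Omega$ is generated by $X_{\cdot\wedge s}$ and $\tau\wedge s$, both $\Filt^\P_s$-measurable. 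Since $\theta=\tau$ $\bar\P$-a.s.\ by construction, $\E^{\bar\P}[Z(\theta,\omega_{\cdot\wedge\theta})]=\E^\P Z_\tau$, and taking suprema gives $\leq$.

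For the converse direction, fix $\bar\P\in\bar\Qset^E$ and let $\P:=\bar\P\circ\pi_\Omega^{-1}$ be its $\Omega$-marginal. The event constraints descend directly and $X$ remains a $\P$-martingale in $\FF$ by the tower property applied to $\FF\subset\bar\FF$, so $\P\in\Qset^E$. Disintegrate $\bar\P(d\theta,d\omega)=\P(d\omega)\,K(\omega,d\theta)$ with $K$ a probability kernel, so
\[
\E^{\bar\P}[Z(\theta,\omega_{\cdot\wedge\theta})] = \E^\P\!\left[\int_0^1 Z_t\, K(\omega,dt)\right],
\]
where $Z_t:=Z(t,\omega_{\cdot\wedge t})$ is $\FF$-progressively measurable. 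Since $K$ is typically not $\FF$-adapted in $\omega$, I would replace it by the $\FF$-dual optional projection (under $\P$) of the raw increasing process $t\mapsto K(\omega,[0,t])$. The resulting process $A$ is $\FF$-adapted, nondecreasing, rcll, and satisfies $A_1=1$ a.s.; by the defining property of dual optional projection applied to the optional integrand $Z$,
\[
\E^\P\!\left[\int_0^1 Z_t\, K(\omega,dt)\right] = \E^\P\!\left[\int_0^1 Z_t\, dA_t\right].
\]
Note that $1-A_t$ is precisely the $\FF$-Azéma supermartingale of the random time $\theta$ under $\bar\P$.

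The final step is a randomised-stopping-time argument. On the product extension $(\Omega\times[0,1],\,\Filt^\P\otimes\Borel([0,1]),\,\P\otimes\lambda)$, the quantile $\tau^*(\omega,u):=\inf\{t:A_t(\omega)\geq u\}$ is a stopping time of the product filtration whose conditional distribution given $\Filt^\P_\infty$ equals $dA$, so $\E[Z_{\tau^*}]=\E^\P[\int_0^1 Z_t\,dA_t]$. The $\FF^\P$-Snell envelope $S$ of $Z$ is the smallest $\FF^\P$-supermartingale dominating $Z$; since the independent uniform adds no information exploitable by a stopping rule, $S$ is also the Snell envelope of $Z$ on the extended filtration, and optional stopping yields $\E[Z_{\tau^*}]\leq S_0=\sup_{\tau\in\Tset^\P}\E^\P Z_\tau$. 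Chaining the three displays gives $\E^{\bar\P}[Z(\theta,\omega_{\cdot\wedge\theta})]\leq\sup_{\tau\in\Tset^\P}\E^\P Z_\tau$; taking $\sup_{\bar\P}$ on the left and $\sup_{(\P,\tau)}$ on the right completes the proof.

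The delicate step I anticipate is the optional-projection identity in the second paragraph for $Z$ only Borel and bounded below: I would handle it by a monotone class argument starting from bounded continuous $Z$ (where $Z_t$ is simultaneously predictable and optional, and the identity is classical), together with a truncation $Z\wedge n\uparrow Z$ and monotone convergence to remove unboundedness from above. A secondary technicality is the rcll regularity of $A$ as a genuine process and the measurability of $\tau^*$, both standard for dual optional projections of integrable raw increasing processes; the Snell-envelope comparison between $(\Omega,\FF^\P,\P)$ and its product extension is then the Baxter--Chacon-style equality between the value of ordinary and randomised optimal stopping.
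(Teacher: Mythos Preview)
The easy direction $\leq$ is fine and matches the paper. The converse direction contains a genuine gap that cannot be repaired without a change of measure.

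Your claim that the dual optional projection $A$ of $R_t:=K(\omega,[0,t])$ satisfies $A_1=1$ a.s.\ is false in general, and this breaks the randomised-stopping-time step. Consider the paper's Example~\ref{examplerandomtime}: $\bar\P''=\tfrac12(\delta_0\otimes\P_0)+\tfrac12(\delta_1\otimes\P_1)$, where $\P_0$ is a constant process and $\P_1$ is constant on $[0,1/2]$ then a Brownian motion. One checks $\bar\P''\in\bar\Qset$, its $\Omega$-marginal is $\P=\tfrac12(\P_0+\P_1)$, and a direct computation gives the dual optional projection $A_t=\tfrac12$ for $t\in[0,1)$ and $A_1=\tfrac12+\I_{\mathrm{supp}(\P_1)}$, hence $A_1\in\{\tfrac12,\tfrac32\}$. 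Your quantile $\tau^*(\omega,u)$ is then $+\infty$ with positive probability, so the identity $\E[Z_{\tau^*}]=\E^\P[\int Z_t\,dA_t]$ and the Snell-envelope bound both collapse. (Incidentally, $1-A_t$ is \emph{not} the Az\'ema supermartingale; that is $1-{}^oR_t$, the optional projection, which is a supermartingale and not increasing.)

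More decisively, the inequality you aim for, $\E^{\bar\P}[Z_\theta]\leq\sup_{\tau\in\Tset^\P}\E^\P Z_\tau$ with $\P$ the $\Omega$-\emph{marginal}, is simply false. In the same example take $Z_0=1$, $Z_t=0$ for $t\in(0,1)$, $Z_1=2\,\I_{\mathrm{supp}(\P_1)}$. Then $\E^{\bar\P''}[Z_\theta]=\tfrac12\cdot1+\tfrac12\cdot2=\tfrac32$, whereas $\sup_{\tau\in\Tset^\P}\E^\P Z_\tau=1$: at time $0$ the filtration is trivial, so one must either stop everywhere (value $1$) or continue and collect $\E^\P Z_1=1$. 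The theorem still holds because a \emph{different} measure, namely $\P_1$ with $\tau=1$, gives value $2$; but your argument never leaves the marginal $\P$. The paper's proof addresses exactly this: it applies the It\^o--Watanabe multiplicative decomposition $1-{}^oR=M(1-A)$ to obtain a genuine randomised stopping time $A$ \emph{together with} a change of measure $d\P'/d\mu_\Omega=M_1$, and the substantial Lemma~\ref{lem:americanconvexmtg} is devoted to showing that $\P'$ remains a martingale measure in $\Qset^E$. That change of measure is the missing idea.
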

The set $E$ above is a technical device that will allow us to apply Theorem \ref{thm:americanconvexmtg} to a variety of settings. 
The proof of the theorem is split into two main steps. Lemma \ref{lem:americanconvex} connects measures on $\bar\Omega$ to measures on $\Omega$ and stopping times. Lemma \ref{lem:americanconvexmtg} ensures that the martingale property is preserved along the way. The latter, we feel, is the more subtle of the two steps. If one does not care about the preservation of the martingale property, then there are much more direct approaches to the first step.

\subsection{Convexifying stopping times and measures}


We are interested in pairs of the form $(\tau,\P)$ where $\P$ belongs to some subset of probability measures on $\Omega$ and $\tau$ is an $\FF^\P$-stopping time, $\tau\in \Tset^\P$.
Each pair $(\tau,\P)$ defines a linear map on the space of jointly measurable functions on $(t,\Omega)$ via $\phi\to \E^\P(\phi_\tau)$.
In order to employ convex duality techniques to solve optimisation problems of the form $\sup_{(\tau,\P)}\E^\P(\phi_\tau)$, we need to identify the convex hull of such pairs $(\tau,\P)$ or linear maps. 
Since many problems have objectives or constraints on $\P$, the convexification should be done in a way that preserves the convexity of $\P$ (or the induced linear map on measurable functions on $\Omega$).

A natural approach (see e.g., \cite{ElKarouiTan, aksamit2019robust}) is to identify $(\tau,\P)$ with a measure from $\Pset(\bar\Omega)$. However, a general element in $\Pset(\bar\Omega)$ usually corresponds to a random time (not even a randomised stopping time, see Example \ref{examplerandomtime}). The approach of \cite{aksamit2019robust} overcomes this by explicitly constructing an optimal stopping time via backward induction in discrete time, which cannot be applied in continuous time settings.
Our approach is to use $\Pset(\bar\Omega)$ as well and to use the non-anticipative nature of American payoffs together with the martingale property of our pricing measures to show that we can indeed restrict to randomised stopping times.

\begin{definition}\label{def:randST}
    A \emph{randomised stopping time} is an $(\Omega,\FF)$-adapted, right-continuous, increasing process $A$ with $A_0=0$ and $A_1=1$.
\end{definition}
Randomised stopping times have been studied in 
\cite{BaxterChacon}, see also \cite{MeyerBaxterChacon}, and in many works since, see \cite{BCHInventiones2017} for a good discussion in relation to OT methods for Skorokhod embeddings. If $\tau$ is an $\FF$-stopping time, then $A^\tau_t=\I_{\tau\leq t}$ is a randomised stopping time which has a particularly simple structure: it is equal to zero and then jumps to one at time $\tau$. More generally, $A_t$ can be interpreted as the proportion of mass that has been stopped along each path by the time $t$. The non-anticipative property of $A$ helps eliminate the possibility of random times that are not randomised stopping times. Our aim is to extract an appropriate non-anticipative $A$ from each $\mu\in\Pset(\bar\Omega)$. First however, we give a simple example to show that naive convexification of pairs $(\tau,\P)$ would not result in randomised stopping times. 

\begin{example}\label{examplerandomtime}
    Consider two pairs of stopping times and martingale measures, $(\tau,\P)$ and $(\tau',\P')$, where $\tau=0$, $\tau'=1$, $\P$ is the law of a constant process, and $\P'$ is the law of the same constant process on $t\in[0,1/2]$ followed by a Brownian motion on $t\in[1/2,1]$. Let $\bar\P = \delta_0 \otimes \P$ and $\bar\P' = \delta_1 \otimes \P'$ be the corresponding measures in the enlarged space. The average $\bar\P''=(\bar\P+\bar\P')/2$ is a martingale measure, however its $\theta$ component does not correspond to a stopping time or a randomised stopping time, as it requires the knowledge of the process from $t=1/2$ to determine the stopping decision at $t=0$. In particular, at $t=0$ it will immediately stop all paths that will be constant on $t\in[1/2,1]$. However, at time $t=0$, it is not possible to distinguish those paths from those which will follow a Brownian motion on $[1/2,1]$. 
\end{example} 
 
 We recall the following simple properties of randomised stopping times. 

\begin{lemma}[\cite{shmaya2014equivalence}]\label{randomisedstopping}
Let $A$ be a randomised stopping time and for each $r\in[0,1]$, define $\tau_r=\inf\{t \geq 0 :A_t\geq r\}$. Then $(\tau_r, r\in [0,1])$ is a non-decreasing family of stopping times and 
\[
A_t(\omega) = \lambda ({r\in [0,1]: \tau_r(\omega)\leq t}),\quad t\in [0,1],
\]
where $\lambda$ is the Lebesgue measure. Consequently, for any $\eta\in L^\infty(\bar\Omega)$,
\[
\int_0^1 \eta(\theta,\omega)\, dA_\theta = \int_0^1 \eta(\tau_r,\omega)\, dr.
\]
\end{lemma}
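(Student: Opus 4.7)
The plan is to exploit the standard right-continuous pseudo-inverse relationship between $A$ and the family $(\tau_r)$. The key observation is the event equality
\[
\{\tau_r \le t\} = \{A_t \ge r\}, \qquad r,t\in[0,1].
\]
The inclusion ``$\supseteq$'' is immediate from the definition of $\tau_r$. For ``$\subseteq$'', if $\tau_r(\omega) \le t$, a minimising sequence $t_n \downarrow \tau_r(\omega)$ with $A_{t_n}(\omega) \ge r$ combined with right-continuity of $A$ at $\tau_r(\omega)$ gives $A_{\tau_r(\omega)}(\omega) \ge r$, whence $A_t(\omega)\ge r$ by monotonicity of $A$. Since $A$ is $\FF$-adapted, the right-hand side lies in $\Filt_t$, so each $\tau_r$ is an $\FF$-stopping time, and monotonicity of $r \mapsto \tau_r$ is clear from the definition of the infimum.

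For the measure-theoretic identity, fix $\omega$ and $t\in[0,1]$. Using the event equality above and $A_t(\omega)\in[0,1]$, we compute
\[
\lambda\bigl(\{r\in[0,1]: \tau_r(\omega) \le t\}\bigr) = \lambda\bigl(\{r\in[0,1]: A_t(\omega) \ge r\}\bigr) = A_t(\omega).
\]

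For the integral identity, fix $\omega$ and introduce two Borel probability measures on $[0,1]$: let $\mu_\omega$ be the Lebesgue--Stieltjes measure associated with the distribution function $t\mapsto A_t(\omega)$, so that $\int_{[0,1]} f\, d\mu_\omega = \int_0^1 f(\theta)\, dA_\theta(\omega)$ for every bounded measurable $f$; and let $\nu_\omega$ be the pushforward of Lebesgue measure on $[0,1]$ under $r\mapsto \tau_r(\omega)$. The previous step gives $\mu_\omega([0,t]) = A_t(\omega) = \nu_\omega([0,t])$ for all $t$, so $\mu_\omega=\nu_\omega$ by a $\pi$--$\lambda$ argument on the generating system of intervals $[0,t]$. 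Applying this to the bounded measurable function $\theta\mapsto \eta(\theta,\omega)$ and invoking the standard change-of-variables formula for pushforwards yields
\[
\int_0^1 \eta(\theta,\omega)\, dA_\theta(\omega) = \int_{[0,1]} \eta(\theta,\omega)\, \nu_\omega(d\theta) = \int_0^1 \eta(\tau_r(\omega),\omega)\, dr.
\]
I do not anticipate any serious obstacle: the statement is a classical fact about right-continuous inverses of right-continuous non-decreasing processes, and the argument above is essentially a careful pathwise bookkeeping of that correspondence.
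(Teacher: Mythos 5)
Your proof is correct and is the standard right-continuous-inverse argument; the paper itself states this lemma without proof, citing \cite{shmaya2014equivalence}, and your argument (the event identity $\{\tau_r\le t\}=\{A_t\ge r\}$ via right-continuity, followed by identification of the Stieltjes measure of $A_\cdot(\omega)$ with the pushforward of Lebesgue measure under $r\mapsto\tau_r(\omega)$) is exactly the classical route. No gaps.
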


We also need to recall the It\^o--Watanabe multiplicative decomposition of non-negative supermartingales. 
\begin{theorem}[\cite{ito1965transformation}]\label{thmitowatanabe}
Let $\xi$ be a non-negative right continuous supermartingale with $\xi_0>0$ defined on a filtered probability space satisfying the usual hypothesis. Then $\xi$ has the decomposition \[\xi=MA\] with a positive local martingale $M$ and a natural decreasing process $A$. If there are two such factorisations, then they are identical up to $T_\xi=\inf\{t \geq 0: \xi_t=0\}$.

Furthermore, if there exists a constant $K>0$ and an almost surely finite stopping time $T$ such that $1/K\leq \xi_t\leq K$ for $t<T$ and $\xi_t=\xi_T$ for $t\geq T$, then the local martingale $M$ in the decomposition is a (true) martingale.
\end{theorem}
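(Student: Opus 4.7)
The plan is to build the multiplicative decomposition from the additive Doob--Meyer decomposition via a Dol\'eans exponential: write $\xi=\xi_0+N-B$ with $N$ a local martingale and $B$ a predictable increasing process on $[0,T_\xi)$, then define $A$ as the stochastic exponential of $-\int_0^\cdot dB_s/\xi_{s-}$, verify that $M:=\xi/A$ is a local martingale by an It\^o product-rule computation in which the jumps of $A$ are engineered to cancel the predictable drift $-B$, and handle uniqueness by exploiting the fact that a predictable local martingale of finite variation is constant. The true martingale claim under the second hypothesis then follows from a boundedness / uniform-integrability argument.

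Concretely, localise by $T_n:=\inf\{t\geq 0:\xi_t\leq 1/n\}\wedge n\wedge T_\xi$ so that $\xi_{-}$ is bounded away from $0$ on $[0,T_n]$, and set
\[
A_t:=\exp\!\Bigl(-\int_0^t\frac{dB^c_s}{\xi_{s-}}\Bigr)\prod_{0<s\leq t}\Bigl(1-\frac{\Delta B_s}{\xi_{s-}}\Bigr),
\]
where $B^c$ is the continuous part of $B$. The supermartingale bound $0\leq\Delta B_s\leq\xi_{s-}$ places each factor in $[0,1]$, so $A$ is predictable, pathwise decreasing, and takes values in $(0,1]$ on $[0,T_n]$, with $\Delta A_s/A_{s-}=-\Delta B_s/\xi_{s-}$. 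Now set $M:=\xi/A$; the integration-by-parts formula $d(MA)=M_{-}\,dA+A_{-}\,dM+d[M,A]$ together with $M_{-}A_{-}=\xi_{-}$ forces, after separating continuous and jump parts, $dM^c=A_{-}^{-1}dN^c$ and $\Delta M_s=\Delta N_s/A_s$. A direct algebraic rewriting then identifies $M-M_0$ with $\int A_{-}^{-1}dN$ plus the correction term $\int \Delta B_s/(A_s\xi_{s-})\,dN_s$, both of which are local martingales because the integrands are predictable and locally bounded. Letting $n\to\infty$ extends the decomposition to $[0,T_\xi)$, and the factors can be defined arbitrarily beyond $T_\xi$ since $\xi\equiv 0$ there.

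For uniqueness on $[0,T_\xi)$, if $\xi=MA=M'A'$ are two valid factorisations then $M/M'=A'/A$ is simultaneously a positive local martingale and a predictable process of finite variation; a classical result states that such a process is indistinguishable from its initial value, which here is $1$. For the true martingale assertion, the hypothesis gives $1/\xi_{-}\leq K$ on $[0,T]$ and the Doob--Meyer compensator satisfies $E[B_T]\leq\xi_0\leq K$, which one leverages to show that $A_T$ is bounded away from $0$ in a sufficiently integrable sense, and hence that $M=\xi/A$ admits an integrable envelope on $[0,T]$; combined with $M$ being constant after $T$, this upgrades the positive local martingale $M$ to a genuine uniformly integrable martingale.

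The main obstacle is the jump bookkeeping in the product-rule step: in the continuous case $dM=dN/A$ drops out of It\^o's formula immediately, but in the c\`adl\`ag case the identity $\Delta M_s=\Delta N_s/A_s$ (with $A$, not $A_{-}$, in the denominator) must be reconciled with the usual form of stochastic integrals, which is achieved by observing that the discrepancy $\sum \Delta B_s\Delta N_s/(A_s\xi_{s-})$ is itself a predictable-integrand integral against $N$. Predictability of $B$ is essential throughout: without it $A$ would fail to be predictable and the uniqueness argument would collapse, since non-predictable local martingales of finite variation need not be constant. A secondary quantitative subtlety in the true-martingale step is controlling $1/A_T$; one either invokes the explicit exponential form of $A$ combined with $\Delta B_s/\xi_{s-}\leq 1$, or argues directly from $M\geq 1/K>0$ and $E[M_t]\leq \xi_0$ that $M$ is UI and hence a true martingale.
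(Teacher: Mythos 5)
The paper offers no proof of this statement --- it is quoted from It\^o--Watanabe (1965) --- so the comparison is with the classical argument. Your existence construction is exactly that classical route (Doob--Meyer $\xi=\xi_0+N-B$, then $A=\mathcal{E}\bigl(-\int dB_s/\xi_{s-}\bigr)$, with the jump identity $\Delta M_s=\Delta N_s/A_s$), and it is sound in outline; the facts you leave implicit (absorption of $\xi$ at $0$ after $T_\xi$, $0\le\Delta B_s\le\xi_{s-}$ at the predictable jump times of $B$, local boundedness of the c\`adl\`ag predictable process $1/A$ on $[0,T_\xi)$) are standard. Your uniqueness step, however, has a logical gap: you assert that $M/M'=A'/A$ is \emph{simultaneously} a positive local martingale and a predictable finite-variation process. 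The second property is clear, but a ratio of positive local martingales is in general not a local martingale, and nothing in your argument shows it is here --- you are assuming the conclusion before invoking the lemma that predictable finite-variation local martingales are constant. The standard repair is short but different: set $V:=A'/A$, write $M=M'V$, integrate by parts to see that $M-\int V\,dM'=\int M'_{-}\,dV$ is a predictable finite-variation local martingale, hence constant, and conclude $V\equiv 1$ up to $T_\xi$ because $M'_{-}>0$.

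The more serious gap is in the ``furthermore'' part. Your fallback argument --- $M\ge 1/K$, $\E[M_t]\le\xi_0$, hence uniformly integrable, hence a true martingale --- is false: take $M_t=1+L_{t\wedge 1}$ with $L$ an inverse three-dimensional Bessel process started at $1$; this is a positive local martingale, bounded below by $1$, constant after the a.s.\ finite time $T=1$, with $\E[M_t]\le 2$, yet it is a strict local martingale. So positivity, a constant lower bound and bounded expectations do not suffice; the hypothesis on $\xi$ must enter through the multiplicative structure. Your primary route also falls short: $\E[B_T]\le\xi_0\le K$ only controls $\E\bigl[\log(1/A_T)\bigr]$, whereas uniform integrability of the localized values $M_{T\wedge S_n}$ requires control of $1/A$ itself, i.e.\ an exponential-type bound that does not follow from $\E[B_T]\le K$. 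The argument that works is to use both bounds on $\xi$ via $1/A_t=M_t/\xi_t\le K\,M_t$ for $t<T$: by monotonicity of $A$, $\sup_n M_{T\wedge S_n}$ is dominated (up to the factor $K^2$) by $M_{T-}\vee M_T$, which is integrable by the supermartingale property of the non-negative local martingale $M$; dominated convergence then gives $\E[M_T]=\lim_n\E[M_{T\wedge S_n}]=\xi_0$, and since $M$ is constant after $T$, constancy of the expectation upgrades $M$ to a uniformly integrable martingale. Without this (or an equivalent) domination step, the second paragraph of the theorem remains unproved in your proposal.
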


The distinction between local and true martingale in Theorem \ref{thmitowatanabe} actually creates some difficulties for us, but we will work around it by focusing on the following type of measures in $\Pset(\bar\Omega)$. 
\begin{definition}
    For $\epsilon\in (0,1)$, we say that $\mu\in \Pset(\bar\Omega)$ is \emph{$\epsilon$-modified} if $\mu(\{1\}\times \Gamma) \geq \epsilon \mu^\Omega(\Gamma)$ for any $\Gamma\in \Filt_1$, and where $\mu^\Omega$ is the $\Omega$-marginal of $\mu$. 
\end{definition}
\begin{lemma}
    Let $\epsilon\in (0,1)$, $\mu\in \Pset(\bar\Omega)$ and $\delta_1\in\Pset([0,1])$ be the Dirac mass at 1. Then $\mu^\epsilon := (1-\epsilon) \mu + \epsilon (\delta_1\times \mu^\Omega)$ is $\epsilon$-modified and has the same $\Omega$-marginal as $\mu$. In addition, if $\mu\in \bar \Qset^E$ then $\mu^\epsilon \in \bar \Qset^E$.
\end{lemma}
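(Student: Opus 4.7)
The claim decomposes into three essentially independent checks, none of which should require heavy machinery; the whole proof is really a direct computation on linear combinations of measures. My plan is to (i) identify the $\Omega$-marginal of $\mu^\epsilon$ explicitly, (ii) read off the $\epsilon$-modified inequality from this, and (iii) verify one-by-one that the three defining properties of $\bar\Qset^E$ (the two constraints and the square-integrable martingale property for $X$) survive the convex perturbation by $\delta_1\times\mu^\Omega$.

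For (i), for any $\Gamma\in\Filt_1$ I would just compute
\[
\mu^{\epsilon,\Omega}(\Gamma)=\mu^\epsilon(\Theta\times\Gamma)=(1-\epsilon)\mu(\Theta\times\Gamma)+\epsilon(\delta_1\times\mu^\Omega)(\Theta\times\Gamma)=(1-\epsilon)\mu^\Omega(\Gamma)+\epsilon\mu^\Omega(\Gamma)=\mu^\Omega(\Gamma),
\]
so the $\Omega$-marginals agree. Combined with $(\delta_1\times\mu^\Omega)(\{1\}\times\Gamma)=\mu^\Omega(\Gamma)$ and the nonnegativity of $\mu(\{1\}\times\Gamma)$, this immediately yields (ii):
\[
\mu^\epsilon(\{1\}\times\Gamma)=(1-\epsilon)\mu(\{1\}\times\Gamma)+\epsilon\mu^\Omega(\Gamma)\ \geq\ \epsilon\mu^\Omega(\Gamma)=\epsilon\mu^{\epsilon,\Omega}(\Gamma),
\]
i.e.\ $\mu^\epsilon$ is $\epsilon$-modified.

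For (iii), the two constraints in \eqref{eq:QsetEdef} are the easy sub-step: because $\bar E_1=\Theta\times E_1$ and, by construction, membership of $(t,\bar\omega_{\cdot\wedge t})$ in $\bar E_2$ depends only on $(t,\omega_{\cdot\wedge t})$ and not on $\theta$, both $\mu^\epsilon(\bar E_1)$ and $(\lambda\otimes\mu^\epsilon)(\bar E_2)$ are functionals of the $\Omega$-marginal alone, and by (i) they coincide with the corresponding quantities for $\mu$, which equal $1$. The main (and really only) point that needs a moment of care is the martingale part. I would argue that under $\delta_1\times\mu^\Omega$ the process $\vartheta_s=1\wedge s=s$ is deterministic, so $\bar\Filt_s$ coincides with $\Filt_s$ up to $\delta_1\times\mu^\Omega$-null sets; consequently the $(\Filt,\mu^\Omega)$-martingale property of $X$ (obtained by projecting the $(\bar\Filt,\mu)$-martingale property, using $\Theta\times\Gamma\in\bar\Filt_s$ for $\Gamma\in\Filt_s$) lifts to the $(\bar\Filt,\delta_1\times\mu^\Omega)$-martingale property, and $\E^{\delta_1\times\mu^\Omega}|X_1|^2=\E^{\mu^\Omega}|X_1|^2<\infty$. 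A linearity-in-$\mu^\epsilon$ computation then gives, for every bounded $\bar\Filt_s$-measurable $\I_A$ and $s<t$,
\[
\E^{\mu^\epsilon}[X_t\I_A]=(1-\epsilon)\E^{\mu}[X_s\I_A]+\epsilon\E^{\delta_1\times\mu^\Omega}[X_s\I_A]=\E^{\mu^\epsilon}[X_s\I_A],
\]
and analogously $\E^{\mu^\epsilon}|X_1|^2=\E^\mu|X_1|^2<\infty$.

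I expect the main (though still minor) obstacle to be this bookkeeping around filtrations under $\delta_1\times\mu^\Omega$: one must be explicit that the $\vartheta$-coordinate contributes no information before applying the projection/lifting argument for the martingale property. Once that observation is made, the three sub-checks are one-line verifications and the lemma is proved.
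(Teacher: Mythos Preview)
Your proof is correct and is precisely the direct verification the paper has in mind; the paper itself writes ``The proof is immediate and we omit the details,'' so your three checks (marginal preservation, the $\epsilon$-modified inequality, and stability of the martingale and constraint conditions under the convex combination) simply spell out what the authors left implicit.
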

The proof is immediate and we omit the details. We call $\mu^\epsilon$ the $\epsilon$-modification of $\mu$. In our proofs, we will establish the desired statements for $\mu^\epsilon$ and take $\epsilon \searrow 0$. 

The following lemma is the first step in the proof of Theorem \ref{thm:americanconvexmtg}. It shows that when the American payoff is adapted we can identify a measure on $\bar\Omega$ with a measure on $\Omega$ and a randomised stopping time. This randomised stopping time is extracted by applying the multiplicative decomposition to an Az\'ema supermatingale (\cite{Azema1972Inventiones}), a technique also used in \cite{li2022vulnerable} in the context of option pricing with default risks.  In the subsequent section we will look at the martingale property of such measures. 
\begin{lemma}
    \label{lem:americanconvex}
Let $\mu\in \Pset(\bar\Omega)$ be $\epsilon$-modified. Then there exists a probability measure $\P\in\Pset(\Omega)$ equivalent to $\mu^\Omega$ and a right-continuous, increasing and $\FF^\P$-adapted process $A$ with $A_0=0$ and $A_1=1$, such that for every $\FF^\P$-adapted process $\psi$,
\[
\mu( \psi_\theta(\omega) ) = \E^{\P} \int_0^1 \psi_t \, dA_t .
\]
This can be written as 
\[
\mu( \psi_\theta(\omega) ) = \E^{\P} \int_0^1 \psi_{\tau_r} \, dr.
\]
for a family of $\FF^\P$-stopping times $\{\tau_r\}_r$ that satisfies $\tau_r=1$ for $r>1-\epsilon$.

\end{lemma}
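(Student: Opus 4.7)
The approach is to construct $\P$ and the randomised stopping time $A$ through the Az\'ema supermartingale of the random time $\theta$ on the enlarged space combined with its It\^o--Watanabe multiplicative decomposition. Define
\[
Z_t(\omega) := \mu(\theta > t \mid \Filt_t)(\omega),
\]
the Az\'ema supermartingale of $\theta$ with respect to $\FF$ (pulled back to $\bar\Omega$), which is a c\`adl\`ag $(\FF, \mu^\Omega)$-supermartingale with $Z_1 = 0$. The $\epsilon$-modification hypothesis immediately yields $Z_t \geq \epsilon$ for every $t < 1$: for any $\Gamma \in \Filt_t$,
\[
\mu(\theta > t, \Gamma) \geq \mu(\theta = 1, \Gamma) \geq \epsilon\, \mu^\Omega(\Gamma).
\]
Combined with $Z_t \leq 1$, the hypotheses of Theorem \ref{thmitowatanabe} are satisfied with $K = 1/\epsilon$ and $T = 1$, producing a decomposition $Z = MD$ where $M$ is a positive true martingale and $D$ is a natural predictable decreasing process.

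Normalize so that every predictable drift of $Z$ is absorbed into $D$, giving $M_0 = 1$, $D_0 = Z_0$, $M_{1^-} = M_1 > 0$ $\mu^\Omega$-a.s., and $D_1 = 0$. Since $\E M_1 = 1$ and $M_1 > 0$, the measure $\P \in \Pset(\Omega)$ defined by $d\P/d\mu^\Omega := M_1$ is a probability equivalent to $\mu^\Omega$. Set $A_t := 1 - D_t$ (with the convention $A_{0^-} = 0$, so any atom of $\theta$ at $t = 0$ appears as the initial jump $A_0 = 1 - Z_0 \geq 0$); this $A$ is right-continuous, increasing, $\FF$-predictable, hence $\FF^\P$-adapted, with $A_1 = 1 - D_1 = 1$.

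The key identity $\mu(\psi_\theta) = \E^\P \int_0^1 \psi_t\, dA_t$ for bounded $\FF^\P$-adapted $\psi$ (reducing to $\FF$-adapted $\psi$ via $\P \sim \mu^\Omega$) is verified by combining the density change
\[
\E^\P \int_0^1 \psi_t\, dA_t \;=\; \E^{\mu^\Omega} \int_0^1 \psi_t M_{t-}\, dA_t,
\]
valid since $A$ is predictable of finite variation, with the product-rule identity
\[
dZ_t \;=\; -M_{t-}\,dA_t + (1 - A_{t-})\,dM_t - d[M,A]_t
\]
obtained from $Z = M(1 - A)$. Because $M$ is a martingale, $A$ is predictable of finite variation, and $M$ is continuous at the predictable jump times of $A$, the predictable finite-variation part of $-dZ$ is precisely $M_{t-}\, dA_t$; by the classical link between the Az\'ema supermartingale's Doob--Meyer compensator and the dual predictable projection of $\I\{\theta \leq \cdot\}$, this equals $dB_t$ where $\mu(\psi_\theta) = \E^{\mu^\Omega} \int \psi_t\, dB_t$. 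The stopping-time representation then follows from Lemma \ref{randomisedstopping}, and the property $\tau_r = 1$ for $r > 1 - \epsilon$ reduces to $\Delta A_1 = D_{1^-} \geq \epsilon$ $\P$-a.s., deduced from $Z_{1^-} = \mu(\theta = 1 \mid \Filt_1) \geq \epsilon$ via the multiplicative structure. The main technical obstacle is precisely this integration-by-parts verification, which requires careful handling of predictable versus optional processes and of the boundary jumps at $t = 0$ and $t = 1$; the pointwise lower bound on the terminal jump of $A$, essential for the stopping-time conclusion, is a similarly delicate consequence of the specific form of the It\^o--Watanabe decomposition in our setting.
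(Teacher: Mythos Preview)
Your approach is essentially the same as the paper's: construct the Az\'ema supermartingale of $\theta$, apply the It\^o--Watanabe multiplicative decomposition, define $\P$ via the martingale density $M_1$, and set $A$ to be one minus the decreasing factor. The difference lies only in how you verify the key identity $\mu(\psi_\theta)=\E^\P\int\psi\,dA$.

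The paper's verification is considerably simpler than yours. It reduces to test functions $\psi_t=\I(t>s)\I_\Gamma$ with $\Gamma\in\Filt_s$, for which the identity becomes
\[
\E^{\mu_\Omega}\big[(1-{}^oR_s)\I_\Gamma\big]=\E^{\mu_\Omega}\big[M_s(1-A_s)\I_\Gamma\big]=\E^{\mu_\Omega}\big[M_1(1-A_s)\I_\Gamma\big]=\E^{\P}\big[(1-A_s)\I_\Gamma\big],
\]
using only that $(1-A_s)\I_\Gamma$ is $\Filt_s$-measurable and $M$ is a true martingale. No integration by parts, no compensator theory, and no appeal to the dual predictable projection of $\I\{\theta\le\cdot\}$ is needed. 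Your route via the product rule for $Z=M(1-A)$, the identification of $M_{t-}\,dA_t$ as the predictable finite-variation part of $-dZ$, and the Bayes-type formula $\E^\P\int\psi\,dA=\E^{\mu^\Omega}\int\psi M_{-}\,dA$ is correct in principle (it relies on ``natural $=$ predictable'' so that $[M,A]$ is a local martingale), but it is exactly the ``technical obstacle'' you flag, and the paper's argument dissolves it entirely. One further minor point: your normalisation gives $A_0=1-Z_0$, which need not vanish if $\theta$ has an atom at $0$; the paper writes $M_0=1$, $A_0=0$, which presumes $Z_0=1$. This is a boundary convention issue rather than a flaw in either argument.
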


\begin{proof} Instead of $\psi$, it suffices to check functions of the form $\psi_t=\I(t> s, \omega\in \Gamma)$ for any fixed $s$ and any $\Gamma\in \Filt^{\mu_\Omega}_s$.

Define the raw IV process $R$ from $\mu$ (by say, disintegrating in $\omega$ and taking the distribution functions with respect to $\theta$), so
\[
\int_{\bar\Omega} \I(\theta> s, \omega\in \Gamma) \, d\mu = \E^{\mu_\Omega} \big( (1-R_s)\I(\omega\in \Gamma) \big),\quad s\in [0,1].
\]
Let ${}^o R_t:=\E^{\mu_\Omega}(R_t\,|\, \Filt^{\mu_\Omega}_{t})$ be the optional projection of $R$. Then $1-{}^o R$ is a non-negative $\mu_\Omega$-supermartingale, and in particular we can and will work with its rcll version, 
often known as the Az\'ema supermartingale after \cite{Azema1972Inventiones}. 
Moreover, since $\mu$ is $\epsilon$-modified, $1\geq 1-{}^o R_1>\epsilon$.
By Theorem \ref{thmitowatanabe}, it admits a multiplicative decomposition $1-{}^o R=M(1-A)$, for some positive $(\FF^{\mu_\Omega},\mu_\Omega)$-martingale $M$ and an increasing $\FF^{\mu_\Omega}$-adapted process $A$ with $M_0=1, A_0=0, A_1=1$. Therefore,
\begin{align*}
\E^{\mu_\Omega} \big( (1-R_s)\I(\omega\in \Gamma) \big) &=\E^{\mu_\Omega}  \big( (1-{}^o R_s)\I(\omega\in \Gamma) \big)\\
&=\E^{\mu_\Omega} \big( M_s(1-A_s)\I(\omega\in \Gamma) \big)\\
&=\E^{\P} \big((1-A_s)\I(\omega\in \Gamma) \big),
\end{align*}
where $d\P/d\mu_\Omega =M_1$. 
The last part follows from Lemma \ref{randomisedstopping}.
\end{proof}

\begin{remark} The above proposition has the following interpretation. 
We showed that every measure in $\Pset(\bar\Omega)$ is equivalent to some measure in $\Pset(\Lambda)$, when tested only against non-anticipative functions. The measure from $\Pset(\Lambda)$ is then equivalent to a pair $(A,\P)$ where $A$ characterises a randomised stopping time, while $\P$ is only uniquely determined until the randomised stopping time finishes. This makes sense since elements of $\Lambda$ cannot see into the future.
\end{remark}

\subsection{Preservation of the martingale property}

Now we will see that if $\mu$ was a martingale measure in $\Pset(\bar\Omega)$, then $\P$ in Lemma \ref{lem:americanconvex} is also a martingale measure in $\Pset(\Omega)$ with the same characteristics on $\Omega$.

\begin{lemma}\label{lem:americanconvexmtg}
Let $\mu\in\bar \Qset$ be $\epsilon$-modified.
Then $\P$ from Lemma \ref{lem:americanconvex} satisfies $\P\in \Qset$. If in addition $\mu\in \bar\Qset^E$, then $\P\in \Qset^E$.
\end{lemma}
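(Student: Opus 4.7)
My plan is to transport the $\bar\FF^\mu$-martingale property of $X$ under $\mu$ to the $\FF^\P$-martingale property under $\P=M_1\cdot\mu^\Omega$, where $M$ and $A$ come from the multiplicative decomposition $1-{}^oR=M(1-A)$ of the Az\'ema supermartingale built in Lemma \ref{lem:americanconvex}. As a warm-up I would first check that $X$ is a square-integrable $\FF^{\mu^\Omega}$-martingale under $\mu^\Omega$: for any $\Gamma\in\Filt_s$ the set $\Theta\times\Gamma$ lies in $\bar\FF_s$, so the $\bar\FF^\mu$-martingale identity under $\mu$ descends to $\mu^\Omega$, and $\E^{\mu^\Omega}[X_t^2]=\E^\mu[X_t^2]<\infty$.

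The heart of the argument is the identity
\[
\E^\P\Big[\int_{(s,1]} X_u\,dA_u \,\Big|\, \Filt_s^\P\Big] = X_s(1-A_s),\qquad s\in[0,1),\ \Gamma\in\Filt_s.
\]
I would compute $\mu[X_\theta\I(\theta>s)\I_\Gamma]$ in two ways. First, applying Lemma \ref{lem:americanconvex} to $\psi_t=X_t\I(t>s)\I_\Gamma$ (with standard truncation and dominated convergence to accommodate the $L^2$-integrability of $X$) gives $\mu[X_\theta\I(\theta>s)\I_\Gamma]=\E^\P[\I_\Gamma\int_{(s,1]}X_u\,dA_u]$. Second, $\theta$ is a $\bar\FF^\mu$-stopping time, so $\theta\vee s$ is a bounded $\bar\FF^\mu$-stopping time, and optional stopping of $X$ under $\mu$ together with $X_{\theta\vee s}=X_s\I(\theta\le s)+X_\theta\I(\theta>s)$ shows $\mu[X_\theta\I(\theta>s)\I_\Gamma]=\mu[X_s\I(\theta>s)\I_\Gamma]$. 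Rewriting the latter via the Az\'ema conditional identity $\mu^\Omega(\theta>s\mid\Filt_s)=1-{}^oR_s=M_s(1-A_s)$ and the density $d\P/d\mu^\Omega=M_1$ produces $\E^\P[X_s(1-A_s)\I_\Gamma]$. Equating the two proves the identity, and hence $N_t:=\int_{(0,t]}X_u\,dA_u+X_t(1-A_t)$ is a $\P$-martingale.

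Next, integration by parts yields $dN_t=(1-A_{t-})\,dX_t$: since $A$ is of bounded variation, $[X,A]_t=\sum_{u\le t}\Delta X_u\Delta A_u$ and the jump term $\int\Delta X\,dA$ cancels against $[X,A]$. The $\epsilon$-modified structure, through the statement $\tau_r=1$ for $r>1-\epsilon$ in Lemma \ref{lem:americanconvex}, forces $A_{t-}\le 1-\epsilon$ on $(0,1]$, so $(1-A_-)^{-1}$ is a bounded predictable process and by associativity of the stochastic integral
\[
X-X_0=\int(1-A_-)^{-1}\,dN,
\]
a true $\P$-martingale. Square integrability follows from the uniform bound $M\le 1/\epsilon$ (from $M(1-A)\le 1$ and $1-A\ge \epsilon$ on $[0,1)$, extended to $t=1$ by martingale convergence), giving $\E^\P[X_t^2]\le \epsilon^{-1}\E^\mu[X_t^2]<\infty$ and $\P\in\Qset$. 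Finally $\P\sim\mu^\Omega$ combined with $\mu(\bar E_1)=\mu^\Omega(E_1)$ and $\lambda\otimes\mu(\bar E_2)=\lambda\otimes\mu^\Omega(E_2)$ transfers the full-measure conditions on $E_1,E_2$ from $\mu$ to $\P$, giving $\P\in\Qset^E$.

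The main difficulty is the bridging identity for $N$: it has to combine two rather different descriptions of $\mu$---the $\bar\FF^\mu$-martingale picture on $\bar\Omega$ (where optional stopping at $\theta\vee s$ is available) and the $\P$-representation from Lemma \ref{lem:americanconvex} (where $\mu$ is encoded through the randomised stopping time $A$ on $\Omega$). The Az\'ema decomposition $1-{}^oR=M(1-A)$ is the algebraic glue between them, and the $\epsilon$-modification is essential because it keeps $1-A_-$ uniformly bounded away from zero so that the integration-by-parts identity can be inverted back into the $\P$-martingale property of $X$ itself.
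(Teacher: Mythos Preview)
Your argument is correct and takes a genuinely different route from the paper's. Both proofs start from the same inputs---Lemma~\ref{lem:americanconvex}, the multiplicative decomposition $1-{}^oR=M(1-A)$, and the $\epsilon$-modification---but diverge in how they convert the $\bar\FF^\mu$-martingale property of $X$ under $\mu$ into the $\FF^\P$-martingale property under $\P$.

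The paper applies optional stopping under $\mu$ at $\theta\wedge\tau_a$ to the process $\II_t=\I_\Gamma\I_{(s,1]}(t)(X_t-X_s)$, translates the result to $\P$ via Lemma~\ref{lem:americanconvex}, and obtains the scalar ODE
\[
\frac{1}{a-1}\int_0^a C_r\,dr=C_a,\qquad C_r:=\E^\P\big[\I_\Gamma\I_{\tau_r>s}(X_{\tau_r}-X_s)\big],
\]
whose only solution is $C\equiv 0$; specialising to $a>1-\epsilon$ (so that $\tau_a=1$) yields $\E^\P[\I_\Gamma(X_1-X_s)]=0$. This argument uses no stochastic calculus under $\P$ whatsoever. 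By contrast, you build the auxiliary process $N_t=\int_{(0,t]}X_u\,dA_u+X_t(1-A_t)$, show directly that it is a $\P$-martingale by combining optional stopping at $\theta\vee s$ with the Az\'ema identity, and then invoke semimartingale integration by parts to get $dN=(1-A_-)\,dX$, which you invert thanks to $1-A_-\ge\epsilon$. Your route needs the a~priori fact that $X$ is a $\P$-semimartingale (which you get from $\P\sim\mu^\Omega$ and the warm-up observation that $X$ is a $\mu^\Omega$-martingale), but in exchange it is more structural: $N$ is a natural object---essentially the balayage of $X$ by the randomised stopping time---and the role of the $\epsilon$-modification as a non-degeneracy condition on $1-A_-$ is made fully transparent. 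The paper's ODE trick avoids any semimartingale machinery under $\P$ and is arguably more elementary, at the cost of being less illuminating about \emph{why} the martingale property transfers.
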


\begin{proof}
Take $\P$ and $\{\tau_r\}_{r\in [0,1]}$ from Lemma \ref{lem:americanconvex}. Since $\tau_r:\Omega\to \R$ is an $\FF^\P$-stopping time (and hence $\FF^{\mu_\Omega}$-stopping time), its (obvious) embedding as a map $\bar\Omega\to\R$ is an $\bar\FF^\mu$-stopping time. This follows from the fact that if $\Gamma$ is a null set under $\mu_\Omega$, then $\Theta \times \Gamma$ is a null set under $\mu$.

For any $s, a\in[0,1]$ and $\Gamma\in\Filt^\P_s$, we let $\II_t(\omega):=\I_\Gamma(\omega)\I_{(s,1]}(t)(X_{t}-X_s)$ and note that $t\to \II_{t\wedge \tau_a}$ is an $\FF^\P$-adapted process. We will show that 
\begin{align}\label{eqconvexmtg100}
\int_a^1 \E^{\P} (\I_\Gamma\I_{\tau_r>s}(X_{\tau_r}-X_s)) \, dr = \int_a^1 \E^\P( \II_{\tau_r} ) \, dr = 0.
\end{align}
Recall that since $\mu$ is $\epsilon$-modified, we have $\tau_a=1$ for any $1>a>1-\epsilon$. Choosing such an $a$ shows that for each $s<1$, $\E^{\P} (\I_\Gamma(X_1-X_s))=0$, thus $X$ is an $(\FF^\P,\P)$-martingale. 

It remains to establish \eqref{eqconvexmtg100}. By the optional stopping theorem
\[
\mu(\II_{\theta\wedge\tau_a} )=0.
\]
This is because both $\theta$ and $\tau_a$ are $\bar\Filt^\mu$-stopping times, $\I_\Gamma\I(\theta\wedge\tau_a>s)$ is $\bar\Filt^\mu_s$ measurable, while $X$ is an $(\bar\FF, \mu)$- and hence $(\bar\FF^\mu, \mu)$-martingale.
Applying Lemmas \ref{randomisedstopping}, \ref{lem:americanconvex} and Fubini's theorem,
\begin{align*}
0=\mu( \II_{\theta\wedge\tau_a} ) &= \E^\P \int_0^1  \II_{t\wedge\tau_a} \,dA_t = \int_0^1 \E^\P( \II_{\tau_r\wedge\tau_a} )\,dr \\
&= \int_0^a \E^\P( \II_{\tau_r} )\,dr + (1-a) \E^\P( \II_{\tau_a}).
\end{align*}
Let $C_r= \E^\P( \II_{\tau_r} )$, we arrive at the equation
\[
\frac{1}{a-1}\int_0^a C_r\, dr = C_a.
\]
Hence $C_a$ is differentiable for $a<1$. Multiplying both sides by $(a-1)$ and differentiating, 
\[
C_a = C_a + (a-1) \partial_a C_a \quad \implies \quad \partial_a C_a =0,\ a<1.
\]
The only possible solution is $C_r=0$ for $r<1$.
Therefore our claim \eqref{eqconvexmtg100} is proven.

Finally, if $\mu\in \bar\Qset^E$, then by definition $\mu_\Omega\in\Qset^E$. Since $\P$ is absolutely continuous with respect to $\mu_\Omega$, we also have $\P\in\Qset^E$.
\end{proof}

\begin{proof}[Proof of Theorem \ref{thm:americanconvexmtg}]
Given each $\P\in\Qset^E$ and $\tau\in\Tset^\P$, we can define $\bar\P \in \Mset(\bar\Omega)$ to be the pushforward measure of $\P$ with respect to the map $\omega \to (\tau(\omega), \omega)$. It is straightforward to check that $\bar\P\in\bar\Qset^E$ and hence 
\[
\sup_{ \P\in\Qset^E, \tau\in\Tset^\P} \E^\P Z(\tau,\omega) \leq \sup_{\bar\P\in\bar\Qset^E} \E^{\bar\P} Z(\theta,\omega).
\] 
For the reverse direction, we first show that it suffices to only focus on $\epsilon$-modified measures. Note that  $\bar\P\in\bar\Qset^E$ implies that $\bar\P^\epsilon\in\bar\Qset^E$.
Since $Z$ is bounded below, there exists some constant $C$ such that $\E^{\bar\P^\epsilon} Z \geq (1-\epsilon) \E^{\bar\P} Z + \epsilon C$. So if we can prove
\[
\E^{\bar\P^\epsilon} Z  \leq \sup_{\P\in\Qset^E, \tau\in\Tset^\P} \E^\P Z(\tau,\omega),
\]
for every such $\bar \P^\epsilon$, the we can complete the proof by taking $\epsilon\to 0$. 

So, without loss of generality, assume that $\bar\P\in \bar\Qset^E$ is already $\epsilon$-modified. By Lemma \ref{lem:americanconvex}, to $\bar\P$ we can associate $\P\in \Pset$ such that 
\[
\E^{\bar\P} Z(\theta,\omega) =  \int_0^1 \E^{\P} Z(\tau_r,\omega) \, dr.
\]
Applying Lemma \ref{lem:americanconvexmtg}, we deduce that $\P\in \Qset^E$. 
We conclude that
\[
\E^{\bar\P} Z(\theta,\omega) \leq \esssup_{r\in[0,1]} \E^{\P} Z(\tau_r,\omega) \leq \sup_{\P\in\Qset^E, \tau\in\Tset^\P} \E^\P Z(\tau,\omega).
\]
Since this holds for every $Z$, our proof is complete. Note that we have established, in particular, that $\Qset^E=\emptyset$ if and only if $\bar\Qset^E=\emptyset$.
\end{proof}

\begin{remark}
Fix a random variable $g:\Omega \to \R$. Define $\Qset^E_g := \Qset^E \cap \{\P:\E^\P (g)=0\}$ and $\bar\Qset^E_g := \bar\Qset^E \cap \{\bar\P:\E^{\bar\P} (g)=0\}$.
We cannot apply Theorem \ref{thm:americanconvexmtg} to the set $\Qset^E_g$. In fact, we may have
\[
\sup_{ \P\in\Qset^E_g, \tau\in\Tset^\P} \E^\P Z(\tau,\omega) < \sup_{\bar\P\in\bar\Qset^E_g} \E^{\bar\P} Z(\theta,\omega).
\]
The inequality may arise because when we construct the measure $\P$ from $\bar\P_\Omega$, there is no guarantee that the constraint $\E^\P g =0$ is preserved. This is further explored in section \ref{sectiondualitygap}, which shows that when static European options are included, there could be a duality gap between the robust hedging and robust pricing. Both will follow from Example \ref{ex:dualitygap} and Theorem \ref{thm:americanduality}. 
\end{remark}

\section{Robust hedging of American options and weak pricing-hedging duality}
\label{sec:robusthedging}

\subsection{Pathwise stochastic integration}
Dynamic trading is described through stochastic integrals, so we need to define $\int_0^{\cdot} q_t dX_t$ simultaneously for many martingale measures $\P$'s, which may well be mutually singular. This requires some restrictions on the objects we consider. In our setup, we can naturally restrict to suitably continuous integrands $q$ and  define the stochastic integrals pathwise as limits of certain discrete Riemann-sum approximations, setting the integrals to be zero if limits are not defined. A successful construction simply has to ensure that the set of paths for which the limits exist has full $\P$-measure, for all $\P$'s considered, and the pathwise integrals coincide $\P$-a.s. with their classical It\^o counterparts. Such an approach to pathwise integration has been considered in many works; see \cite{Karandikar, Follmer, Peng, DenisMartini, DavisOblojSiorpaes, Promeletal, ContFournier}. It is worth noting that a different approach to this question of \emph{aggregation} was proposed in \cite{SonerTouziZhang} who worked with the properties of the set of measures considered instead. 

For our purposes, \cite{Karandikar} provides the most elegant and simplest aggregation method. Given two functions $\rho,\eta$ on $[0,t]$, $\rho$ right-continuous and with left limits (rcll), or left-continuous with right limits (lcrl) and $\eta$ rcll, their integral $\rho\bcdot \eta$ is defined as a rcll function on $[0,t]$, and $(\rho\bcdot \eta)_t = (\tilde \rho\bcdot \eta)_t$ if $\rho$ and $\tilde \rho$ are versions of each other with different (right- or left-) continuity properties. In fact, $\rho\bcdot \eta$ is given as the limit, in the topology of uniform convergence, of $\rho^n\bcdot \eta$, where $\rho^n$ is piece-wise constant and hence the integral is simply a sum, with $\rho\bcdot \eta=0$ on the set where the sequence does not converge. The latter is the same as the set where the sequence is not Cauchy and is a measurable set in $\Filt_t$. It is also a null set under any semimartingale measure, since by \cite[Thm.~3]{Karandikar} this pathwise construction a.s.\ coincides with the It\^o stochastic integral of an adapted rcll (or lcrl) process against a continuous semimartingale, both defined on some filtered probability space satisfying the usual hypothesis. We also note that if we consider the construction on $[0,s]$ and $[0,t]$, with $s<t$, then the two approximations $\rho^n\bcdot \eta$ coincide on $[0,s]$. It follows that $(\rho \bcdot \eta)_{t\in [0,1]}$ is a measurable map on $\Lambda$, i.e., is progressively measurable and has continuous paths on the ($\Filt_1$--measurable) set where the approximations converge uniformly on $[0,1]$. This justifies our use of the integral notation, we write:  
\begin{equation}\label{eq:defSI}
(q(\omega)\bcdot X(\omega))_t = \int_0^t q_s(\omega)dX_s(\omega) = \left(\int_0^t q_s dX_s\right)(\omega),
\end{equation} 
where usually $q\in C(\Lambda)$ and the process $q_t(\omega) := q(t,\omega_{\cdot \land t})$, $t\in [0,1]$ has continuous paths for all $\omega\in \Omega$ and is $\FF$-progressively measurable. For $q\in C(\Lambda;\R^d)$ and $X$ an $\R^d$ valued process, the integrals are defined component-wise.  

\subsection{Robust superhedging of American Options}

We return now to the problem of superhedging an American option with payoff $Z\in C_b(\Lambda)$. The seller of the option wants to robustly hedge their exposure. They can trade dynamically in the underlying $X$ and, since they observe the exercise time, can also adjust the hedging strategy accordingly: we start with a hedging strategy $q$ and switch to $q^u$ if the option is exercised at time $u$. To parametrise the latter, we introduce the set 
$$\bar\Lambda_{\geq} = \{(t,u,\omega_{t\land \cdot}): u\in [0,1], t\in [u,1], \omega\in \Omega \}\subseteq \bar \Lambda,$$
with metric inherited from $\bar \Lambda$. We recall the sets of martingale measures $\Qset^E$ and $\bar\Qset^E$ defined in \eqref{eq:QsetEdef}. We will later use $E_2$ to encode the seller's pathwise beliefs about the range of asset behaviour they want to hedge against. In addition, the seller can trade statically in the European options at their market prices observed today. Without any loss of generality, we can assume these options have zero prices today since, as we ignore transaction costs, we can simply shift the options' payoffs by their true prices. The options' payoffs are thus given by some $g\in C_b(\Omega; \R^m)$, where we assume $g$ is bounded to be able to later apply OT-duality methods, see Proposition \ref{prop:Euroduality}. 
Naturally, the motivating example is that of $g$ being a vector of put payoffs, call options being converted to puts via call-put parity. Finally, the set of calibrated martingale measures is given by $\Qset^E_g=\{\P\in \Qset^E: \E^\P(g)=0\}$. We recall also that for $\P\in \Qset$, $\FF^\P$ is the right-continuous and completed version of the natural filtration. Throughout this section we assume that $\Qset^E$, $\Qset^E_g$ are non-empty. 

The superhedging price of the American option is defined as follows:
\begin{align*}
\pi^A_{g,E}(Z):=\inf\bigg\{ & x: \exists (q,\tilde q,h)\in C_b(\Lambda;\R^d)\times C_b(\bar\Lambda_{\geq};\R^d) \times \R^m,\ \text{s.t.}\  x+\int_0^u q(t,\omega_{t\land \cdot})dX_t(\omega)\\
 & \quad +\int_u^1 \tilde q^u(t,\omega_{t\land \cdot}) dX_t(\omega) + h g \geq Z(u,\omega_{\cdot\wedge u}),\ \forall\, u\in[0,1],\  \Qset^E_g(d\omega)\text{-\qs} \bigg\},
\end{align*}
where $\Qset^E_g\text{-\qs}$ means $\P$-a.s.\ for any $\P\in\Qset^E_g$. 
When we want to stress the dependence on the set restrictions on the martingale test measures, we write $\pi^A_{g,E}(Z)$. 
On the other hand, when there are no traded European options, $g=0$, we may simply write $\pi^A(Z)$. 
It is intuitive that in this case, there is no need to hedge after the exercise time.
\begin{lemma} \label{lem:simpleAhedge}
Let $Z\in C_b(\Lambda)$ and suppose $g=0$. Then $\pi^A_E(Z)=\tilde \pi^A_E(Z)$, where 
$$ \tilde \pi^A_E(Z):=\inf\bigg\{  x: \exists q\in C_b(\Lambda;\R^d)\ \text{s.t.}\  x+\int_0^u q_tdX_t \geq Z_u,\ \forall\, u\in[0,1],\  \Qset^E(d\omega)\text{-\qs}, \bigg\}.
$$
\end{lemma}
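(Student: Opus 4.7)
The inequality $\pi^A_E(Z)\le\tilde\pi^A_E(Z)$ is immediate: given any admissible $(x,q)$ for $\tilde\pi^A_E$, the choice $\tilde q\equiv 0$, $h=0$ turns it into an admissible quadruple for $\pi^A_E$ (since $g=0$). The real content is the reverse inequality, and my plan is to show that whenever $(x,q,\tilde q)$ superhedges for $\pi^A_E$, the pair $(x,q)$ alone already superhedges for $\tilde\pi^A_E$; the point is that the extra post-exercise term $\int_u^1\tilde q^u_t dX_t$ is a mean-zero martingale and so can be conditioned away without destroying the $\Qset^E$-q.s.\ inequality.

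Fix $x>\pi^A_E(Z)$ and a witness $(q,\tilde q)\in C_b(\Lambda;\R^d)\times C_b(\bar\Lambda_{\ge};\R^d)$. Fix further a deterministic time $u\in[0,1]$ and a measure $\P\in\Qset^E$. By \cite[Thm.~3]{Karandikar}, the pathwise integrals in \eqref{eq:defSI} coincide $\P$-a.s.\ with the classical It\^o integrals. Since $\tilde q^u$ is bounded and $X$ is a square integrable $\P$-martingale (because $\P\in\Qset$), the process $t\mapsto \int_u^{t}\tilde q^u_s\,dX_s$, $t\in[u,1]$, is a genuine square integrable $\P$-martingale starting at $0$; in particular its $\Filt^\P_u$-conditional expectation vanishes. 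The quantities $x+\int_0^u q_t\,dX_t$ and $Z_u=Z(u,\omega_{\cdot\wedge u})$ are $\Filt^\P_u$-measurable and bounded (using $Z\in C_b(\Lambda)$ and square integrability of $q\bcdot X$). Applying $\E^\P[\,\cdot\mid\Filt^\P_u]$ to the $\P$-a.s.\ inequality
\[
x+\int_0^u q_t\,dX_t+\int_u^1 \tilde q^u_t\,dX_t\ \ge\ Z_u
\]
therefore reduces it to $x+\int_0^u q_t\,dX_t\ge Z_u$, $\P$-a.s.

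Since $u\in[0,1]$ and $\P\in\Qset^E$ were arbitrary, the pair $(x,q)$ is admissible for $\tilde\pi^A_E$, giving $\tilde\pi^A_E(Z)\le x$; letting $x\downarrow \pi^A_E(Z)$ completes the proof. I do not anticipate any substantial obstacle: the only delicate ingredients are (i) matching the Karandikar pathwise integral with the It\^o integral under each individual $\P$ so that classical martingale manipulations become available, and (ii) upgrading the post-exercise stochastic integral from a local to a true martingale, both of which follow directly from the setup (bounded continuous integrands against a square integrable martingale). Crucially, this argument is self-contained and does not invoke the later pricing-hedging duality Theorem~\ref{thm:americanduality}, so the lemma can legitimately be used as a building block toward that result.
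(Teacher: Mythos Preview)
Your argument is essentially the same as the paper's: fix $u$ and $\P$, observe that the post-exercise integral $\int_u^1\tilde q^u_t\,dX_t$ is a true $\P$-martingale (bounded integrand against a square integrable martingale), and take conditional expectations to strip it away. The only step you leave implicit is the passage from ``for each fixed $u$, $\P$-a.s.'' to ``$\P$-a.s., for all $u\in[0,1]$'' (i.e.\ a single null set independent of $u$); the paper closes this by noting that both $u\mapsto x+\int_0^u q_t\,dX_t$ and $u\mapsto Z_u$ are right-continuous in $u$, so an inequality valid at each rational $u$ upgrades to all $u$ simultaneously. This is a one-line fix, not a genuine gap.
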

\begin{proof}
    It is clear that $\tilde \pi^A(Z)\geq \pi^A(Z)$. Let $(q,\tilde q,h)$ be a superhedging strategy for $\pi^A(Z)$ starting from $x$. We claim that $(x,q)$ is then a superhedging strategy for $\tilde \pi^A(Z)$ which gives the reverse inequality.  
    Fix $u\in [0,1]$ and $\P\in \Qset^E$. Let $M^u_t = \int_u^t \tilde q^u_t dX_t$, $t\in [u, 1]$, which is a $\P$--martingale with $M^u_u=0$. The superhedging property means that 
$$ x+\int_0^u q_t dX_t +  M^u_1 \geq Z_u, \quad \P-\text{a.s.}
$$
and taking $\Filt_u$-conditional expectations, we obtain $x+\int_0^u q_t dX_t \geq Z_u$ $\P$-a.s. Since both LHS and RHS are right-continuous processes, we see that the inequality holds for all $u\in [0,1]$, $\P$-a.s. and hence also $\Qset^E$-q.s. since $\P$ was an arbitrary element of $\Qset^E$.  
\end{proof}
From the above proof it follows that for any $u\in [0,1]$ and any $\P\in \Qset^E$, $M^u$ is a martingale starting in zero and with $M^u_1 \geq 0$ $\P$-a.s., and hence in fact $M^u_1=0$ $\Qset^E$-q.s., so that we not only can, but must, have $\tilde q\equiv 0$ $\Qset^E$-q.s.

We now come back to the richer case when $g\neq 0$ and observe that the so-called weak pricing-hedging duality holds in general. 
\begin{proposition}\label{prop:Americanweakduality}
    Let $Z\in C_b(\Lambda)$, $g\in C_b(\Omega; \R^m)$ and suppose $\Qset^E_g\neq \emptyset$. Then 
    $$\pi^A_{g,E}(Z) \geq \sup_{\P\in \Qset^E_g,\tau\in\Tset^{\P}} \E^{\P} Z_\tau.$$
\end{proposition}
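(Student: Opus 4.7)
The plan is to carry out the standard, easy half of pricing--hedging duality. I fix an arbitrary admissible superhedging quadruple $(x,q,\tilde q,h)$ with $x$ close to $\pi^A_{g,E}(Z)$ and an arbitrary pair $(\P,\tau)\in \Qset^E_g\times \Tset^\P$, substitute $u=\tau(\omega)$ into the pathwise superhedging inequality, take $\E^\P$ of both sides, and show that each of the two stochastic--integral terms has zero $\P$-expectation and that $\E^\P[h\cdot g]=0$. This yields $x\geq \E^\P[Z_\tau]$, and taking the infimum over admissible $x$ followed by the supremum over $(\P,\tau)$ gives the claim.

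The substitution $u=\tau(\omega)$ is legitimate since the superhedging inequality holds for every $u\in[0,1]$ outside a single $\P$-null set of $\omega$'s. The term $\E^\P[h\cdot g]=0$ is built into the definition of $\Qset^E_g$. For the first integral, since $q\in C_b(\Lambda;\R^d)$ is bounded and $\FF$-progressively measurable and $X$ is a square integrable $\P$-martingale, the Karandikar construction recalled in Section~4.1 identifies the pathwise integral $\int_0^\cdot q_t dX_t$, $\P$-a.s., with the $(\FF^\P,\P)$-It\^o integral, which is an $L^2$-bounded $\P$-martingale; optional stopping at the bounded stopping time $\tau$ gives $\E^\P\bigl[\int_0^\tau q_t dX_t\bigr]=0$.

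The technical point is the second integral $\int_\tau^1\tilde q^\tau_t dX_t$, whose lower limit and integrand both depend on $\omega$ through $\tau$. My plan is to combine them into a single bounded, $\FF^\P$-progressively measurable integrand
\[
q^\star_t(\omega):=\tilde q\bigl(t,\tau(\omega),\omega_{t\land \cdot}\bigr)\,\I_{\{t\geq \tau(\omega)\}},
\]
which is well defined because $\tilde q\in C_b(\bar\Lambda_\geq;\R^d)$, bounded for the same reason, and $\FF^\P$-progressive because $\tau\in\Tset^\P$ and $\tilde q$ is jointly continuous on $\bar\Lambda_\geq$. Karandikar's theorem then identifies the pathwise integral $\int_0^\cdot q^\star_t dX_t$ with the corresponding $(\FF^\P,\P)$-It\^o integral, which is a true $\P$-martingale vanishing identically on $[0,\tau]$. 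Hence its value at time $1$, equal $\P$-a.s.\ to $\int_\tau^1 \tilde q^\tau_t dX_t$, has zero $\P$-expectation. Assembling the three zero-expectation statements in the expectation of the substituted superhedging inequality gives $x\geq \E^\P[Z_\tau]$ and the claim follows. Conceptually the argument is routine; the care is in the bookkeeping around $q^\star$ and in ensuring that the pathwise integrals, which must superhedge $\Qset^E_g$--q.s.\ for potentially mutually singular measures, coincide $\P$-a.s.\ with their $\P$-specific It\^o counterparts.
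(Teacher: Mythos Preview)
Your proof is correct and follows essentially the same route as the paper: fix $(x,q,\tilde q,h)$ and $(\P,\tau)$, substitute $u=\tau(\omega)$ into the superhedging inequality, combine the post-exercise hedge into the single process $q^\star_t=\tilde q^{\tau}(t,\omega_{\cdot\land t})\I_{\{t\geq\tau\}}$ (the paper uses $\I_{\{t>\tau\}}$, yielding a lcrl rather than rcll version, which is immaterial), invoke Karandikar's identification of the pathwise and It\^o integrals, and take $\P$-expectations using that bounded integrands against a square-integrable martingale give true martingales. The only place you could be slightly more explicit is in noting that $q^\star$ is rcll (not merely progressive), which is what Karandikar's result actually requires, and that the pathwise integral of $q^\star$ agrees $\omega$-by-$\omega$ with $\int_\tau^1\tilde q^{\tau(\omega)}dX_t$ because Karandikar's construction is path-by-path; the paper makes both points in one line each.
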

\begin{proof}
Take any $\P\in \Qset^E_g$, $\tau\in\Tset^\P$ and any $\pi^A_g(Z)$ superhedging strategy $(x,q,\tilde q,h)$. For $t\in [0,1]$, note that $\omega \to \tilde q^{\tau(\omega)}(t,\omega_{t\land \cdot})\I_{t>\tau(\omega)}$ is a composition of $\omega \to (t, t\land \tau(\omega),\omega_{t\land \cdot})$, which is $\Filt_t^\P$-measurable with $(t,u,\omega_{t\land \cdot}) \to \tilde q^u(t,\omega_{t\land \cdot})\I_{t > u}$ which is measurable. It follows that $t\to \tilde q^\tau(t,\omega_{t\land \cdot})\I_{t>\tau}$ is adapted and left-continuous (with right limits), so progressive and 
$$\int_{\tau}^1 \tilde q^\tau dX_t = \int_0^1 \tilde q^{\tau(\omega)}(t,\omega_{t\land \cdot})\I_{t>\tau}d X_t$$
is well defined $\P$-a.s., and it coincides $\P$-a.s. with the pathwise version in \eqref{eq:defSI}. Naturally, the pathwise stochastic integral of $\tilde q^\tau \I_{t>\tau}$ and of $\tilde q^u \I_{t>u}$ coincide for $\omega\in\Omega$ such that $\tau(\omega)=u$. By the superhedging property we thus have
$$x + \int_0^1 q\I_{t\leq \tau}dX_t + \int_0^1 \tilde q^\tau \I_{t>\tau} dX_t + hg \geq Z_\tau$$
$\Qset_g^E$-q.s.\ and in particular $\P$-a.s., and hence the same equality holds $\P$-a.s., with the stochastic integral replacing the pathwise integral. The desired inequality is obtained taking expectations under $\P$ on both sides, noting that the expectation of the stochastic integral is zero since $\tilde q$ is bounded and $\P\in \Qset^E_g$, so $X$ is a square integrable martingale under $\P$. 
\end{proof}

\subsection{Duality gap and a dynamic extension for statically traded European options}\label{sectiondualitygap}

As discussed in the introduction, the pricing-hedging duality may then fail for an American option, i.e., the inequality in Proposition \ref{prop:Americanweakduality} may be strict. This was originally observed by \cite{Neuberger, HobsonNeuberger}, in a discrete time setting, and explored in a series of subsequent works. In particular, \cite{aksamit2019robust} linked this to the failure of the dynamic programming principle and shown that if the option exercise times are allowed to depend on richer information, namely on dynamic prices of $g$, then pricing-hedging duality is recovered. We establish now the analogous set of results in a continuous time setting. We start with an example where a duality gap arises. 
\begin{example}
\label{ex:dualitygap}

Consider $d=1$ and $\P_0\in \Qset$ be such that $X_t\equiv 1$, $t\in [0,1]$, i.e., $\P_0$ is a constant stock price model. Consider also $\P_1$ under which $X_t\equiv 1$, for $t\in [0,1/2]$ and then follows the SDE
\[
dX_t = \bar\beta X_t dW_t, 
\]
until $X$ reaches $100$ or $t=1$, whichever comes first, and where $\bar\beta>0$ is given and $W_t$ is a Brownian motion (possibly from a larger space). 
In other words, the stock price is constant for half of the time and then behaves as a scaled geometric Brownian motion which stops if it hits the level $100$.

Let the set $\Qset^E$ be the convex hull of $\P_0,\P_1$, so 
\[
E_1=\{X_{[0,1/2]}=1\},\quad E_2=\{\beta_{[1/2,1]}=0\}\cup\{\beta_t=\bar\beta^2\omega_t^2\mathbf{1}_{\sup_{s\leq t}\omega_s<100}, t\in[1/2,1]\},
\]
where $\beta$ is the diffusion characteristic of $X$.
In particular, $\tilde\P\in \Qset^E$, where $\tilde\P = \frac{1}{2} \P_0 + \frac{1}{2} \P_1$. Finally, we let $g = (X_1-1)^+-p$, where $p$ is such that $\E^{\tilde\P}[(X_1-1)^+] = p = \frac{1}{2}\E^{\P_1}[(X_1-1)^+]$, so that $\tilde\P\in \Qset^E_g$. 

We now define the American option payoff via $Z_t= 3p/2-6p|t-1/4|+(X_t - 1)^+\land 100$ for $t\in [0,1/2] $ and $Z_t= (X_t - 1)^+\land 100$ for $t\in (1/2,1]$. Clearly, by definition, $Z\in C_b(\Lambda)$. Note also that for any $\P\in \Qset^E$, we have $X_1\leq 100$ $\P$-a.s., so that $Z_1 = (X_1-1)^+$ $\Qset^E$-q.s., and $Z_{1/4}=3p/2$, $\Qset^E$-q.s. In fact, for any $\P\in \Qset^E$, $Z_t$ is a deterministic process on $[0,1/2]$ which attains its maximum at time $t=1/4$. In addition, for $t\in [1/2,1]$, $Z_t=f(X_t)$, for a convex function $f(w)=(w-1)^+$ and since $\P$ is a martingale measure, $Z_t$ is a submartingale on $[1/2,1]$ and, in particular, $\E^\P[Z_\tau] \leq \E^\P[Z_1]$, for any stopping time $\tau$ taking values in $[1/2,1]$. It follows that the American option will only ever be optimally exercised at times $t=1/4$ or $t=1$. In particular, $Z_{1/4}>Z_t$ for all $t\neq 1/4$ $\P_0$-a.s., and under this measure the optimal exercise time is $\tau_0=1/4$. In contrast, under $\P_1$, we have $\E^{\P_1}[Z_1\mid \Filt_{1/4}] = 2p>Z_{1/4}$ and the optimal exercise policy is $\tau_1\equiv 1$. However, under any $\P\in \Qset^E_g$ we have $\E^{\P}[Z_1\mid \Filt_{1/4}^\P] = \E^{\P}[Z_1]= p < Z_{1/4}$ and hence $\tau_0$ is the optimal exercise policy:
$$ \sup_{\P\in \Qset^E_g} \sup_{\tau\in \Tset^\P} \E^\P Z_\tau = 3p/2.
$$

We will now show that this value is strictly smaller than the superhedging price. In fact, we will show that the above supremum can be strictly increased by allowing a richer family of stopping times. Specifically, consider a two-dimensional standard Brownian motion $(B,W)$ on some probability space $(S,\mathcal{S},\mu)$. 
Let $\beta_t = 0$ for $t\in [0,1/2]$ and 
$$\beta_t =
\begin{cases}
    0 & \text{if } B_{1/4}<0,\\
    \bar\beta^2 X_t^2 \mathbf{1}_{\sup_{s\leq t}X_s<100} & \text{if } B_{1/4}\geq 0, 
\end{cases} \text{ for }t\in (1/2,1],
$$
where $X_0=1$ and $dX_t = \sqrt{\beta_t} dW_t$. Observe that the distribution of $X$ under $\mu$ is given by $\tilde \P$, 
$\tilde \P= \mu\circ X^{-1}$. Consider now $Y_t = \E^\mu[g\mid \cG_t]$, where $\cG_t = \sigma(B_s,W_s: s\leq t)^\mu$. Note that $Y$ is a continuous martingale and that $Y_0 = 0$ while $Y_{1/4} = -p\mathbf{1}_{B_{1/4}<0} + p\mathbf{1}_{B_{1/4}\geq 0}$. Let $\hat\P:= \mu\circ (X,Y)^{-1}$ which is a martingale measure on the canonical space with $d=2$. Note that $\tau_* = \frac{1}{4} \mathbf{1}_{Y_{1/4}<0} + \mathbf{1}_{Y_{1/4}>0}$ is a stopping time in its natural filtration and we have 
$$ \E^{\hat \P}[Z_{\tau_*}] = \frac{1}{2} 3p/2 + \frac{1}{2} 2p = 7p/4 > 3p/2.
$$
It will follow directly from Proposition \ref{prop:middleinequalities} below that this provides a lower bound on $\pi^A_{g,E}(Z)$ and hence there is a strict inequality in the weak duality in Proposition \ref{prop:Americanweakduality}.

\end{example}
To address the issue identified in the above example, in analogy to the discrete time results in \cite{aksamit2019robust}, we consider dynamic \emph{lifts} in which options $g$ are traded for all time $t\in [0,1]$. This is a technical device, a fictitious lift as first pioneered by \cite{CvitanicKaratzas} in the context of trading constraints, which is needed to establish a suitable pricing-hedging duality. The intuition is that these option prices can provide an additional signal which allows to build optimal exercise policies. However, an additional technical difficulty arises in continuous time in relation to how and when additional information is revealed. In Example \ref{ex:dualitygap} above we used an additional Brownian motion to, essentially, generate a flip of a coin at time $t=1/2$. This allowed us to ensure the option prices $Y$ were continuous and $Y_0=0$ was a constant. However, this is not always the case. For a trivial counterexample, suppose $\P$ is a convex combination of two Black-Scholes models with volatilities $\sigma_1<\sigma_2$. This can be realised by flipping an independent coin at time $t=0$ and deciding which model is used. In such a setup, option price $Y_0$ defined via conditional expectations (and taken right-continuous) would be random at time $t=0$ (taking two values). Naturally, if $\P$ is calibrated then we have $\E^\P[Y_0]=0$ and we simply have to mandate that trading in options happens at time ``$t=0-$", i.e., before the information is revealed, we refer to \cite{AksamitOblojZhou} for a detailed discussion. We implement this by working on a time horizon $[-\delta, 1]$ with all processes required to be constant on $[-\delta,0)$, where $\delta\in (0,1)$ is small. 

We thus introduce the enlarged space $\widehat\Omega=D([-\delta,1];\R^{d+m})$ with the canonical process $\widehat X=(X,Y)$ and their natural filtration $\widehat\FF$. We write $\FF^X$ for the filtration on $\widehat \Omega$ generated by $X$. For any probability measure $\widehat\P$ on $\widehat\Omega$, let $\widehat\FF^{\widehat\P}$ be the augmented filtration.   
Also let $\widehat\Tset$ (resp. $\widehat\Tset^{\widehat\P}$) be the set of $\widehat\FF$-stopping times (resp. $\widehat\FF^{\widehat\P}$-stopping times). The space of stopped paths on $\widehat\Omega$ is denoted $\widehat\Lambda$. Spaces $\bar{\widehat\Omega}$ and $\bar{\widehat\Lambda}$ are defined as previously, with $\widehat\Omega$ (resp.\ $\widehat\Lambda$) replacing $\Omega$ (resp.\ $\Lambda$). All these spaces are now defined for time index $t\in [-\delta, 1]$. 

We note that all of the results in section \ref{sec:robustpricing} can be applied in the context of martingale measures on $\widehat\Omega$. Further, we will now make use of the additional pathwise restrictions introduced there to lift calibrated martingale measures on $\Omega$ to restricted martingale measures on $\widehat\Omega$. Given $E_1\in \Borel(\Omega)$ and $E_2\in \Borel(\Lambda)$ we let 
\begin{align*}
    \widehat E_1 &:= \left\{\widehat{\omega}\in \widehat\Omega: \omega_{|[0,1]}\in E_1,\ X_t(\widehat\omega)=X_0(\widehat\omega), Y_t(\widehat\omega)=0, t\in [-\delta, 0),\ Y_1(\widehat\omega) = g(X(\widehat\omega))\right\}\in \Borel (\widehat \Omega),\\
    \widehat E_2 &:= \left\{(t,X_{\cdot \land t}, Y_{\cdot \land t})\in \widehat \Lambda: (t,X_{\cdot \land t})_{t\in [0,1]}\in E_2\right\}\in \Borel(\widehat \Lambda).
\end{align*}
Note that $X_t(\widehat\omega)=X_0(\widehat\omega), Y_t(\widehat\omega)=0, t\in [-\delta, 0)$ is a measurable constraint as it is enough to ensure this for rational $t$ thanks to right-continuity of paths. We write $\widehat{\Qset}^{\widehat E}=\{\widehat \P\in\widehat\Qset: \widehat\P(\widehat E_1)= \lambda\otimes\widehat\P (\widehat E_2)=1\}$. With this notation, we show that  every model in $\Qset^E_g$ can be lifted to a model in $\widehat\Qset^{\widehat E}$.
\begin{lemma}\label{lemmaliftXY}
For any $\widehat\P\in\widehat\Qset^{\widehat E}$, $\widehat \P \circ \left((X_t)_{t\in [0,1]}\right)^{-1} \in \Qset^E_g$. Conversely, for any $\P\in\Qset^E_g$, there exists a  $\widehat\P\in\widehat\Qset^{\widehat E}$ such that the $\widehat \P \circ \left((X_t)_{t\in [0,1]}\right)^{-1}=\P$.
\end{lemma}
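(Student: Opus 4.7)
The proof splits into two directions. For the forward direction I would fix $\widehat\P\in\widehat\Qset^{\widehat E}$ and let $\P$ be the pushforward of $\widehat\P$ under the projection $\widehat\omega\mapsto X(\widehat\omega)|_{[0,1]}$. Square integrability and the martingale property of $X$ on $[0,1]$ under $\P$ are inherited from the $\widehat\P$-martingale property of $(X,Y)$ by projecting onto the smaller filtration $\FF^X$. The constraints $E_1,E_2$ translate immediately from $\widehat E_1,\widehat E_2$. Since $\widehat E_1$ forces $Y_{-\delta}=0$ and $Y_1=g(X)$ under $\widehat\P$, the $\widehat\P$-martingale property of $Y$ yields
\[
\E^\P[g]\;=\;\E^{\widehat\P}[Y_1]\;=\;\E^{\widehat\P}[Y_{-\delta}]\;=\;0,
\]
so $\P\in\Qset^E_g$.

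For the converse, the main task is to construct $\widehat\P$ explicitly from $\P\in\Qset^E_g$. I would let $\tilde Y$ be an rcll version of the closed $\Filt^\P$-martingale $t\mapsto\E^\P[g\mid\Filt^\P_t]$ on $[0,1]$, which exists since $g$ is bounded and $\Filt^\P$ is right-continuous. Then I would define the measurable map $\Phi:\Omega\to\widehat\Omega$ sending $\omega$ to the path that equals $(X_0(\omega),0)$ on $[-\delta,0)$ and $(X_t(\omega),\tilde Y_t(\omega))$ on $[0,1]$, and set $\widehat\P:=\P\circ\Phi^{-1}$. By construction the $X$-marginal on $[0,1]$ is $\P$, the pointwise constraints in $\widehat E_1$ hold (in particular $\tilde Y_1=g(X)$ $\P$-a.s.\ by definition of $\tilde Y$), and $\widehat E_2$ follows from $\lambda\otimes\P(E_2)=1$.

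It remains to verify that $(X,Y)$ is a square integrable $\widehat\P$-martingale on $[-\delta,1]$. On $[0,1]$, the martingale property of $X$ is inherited from $\P$ and that of $\tilde Y$ is built in; square integrability follows from $\P\in\Qset$ and boundedness of $g$. The subtle step will be the martingale identity across $t=0$: because $X$ is constant $X_0$ and $Y\equiv 0$ on $[-\delta,0)$, the augmented filtration satisfies $\widehat\Filt^{\widehat\P}_s=\sigma(X_0)\vee\Nset^{\widehat\P}$ for $s\in[-\delta,0)$. Under the standing convention that $X_0$ is the deterministic current asset price (as in Example \ref{ex:dualitygap}), this $\sigma$-algebra is trivial and the identity $\E^{\widehat\P}[Y_t\mid\widehat\Filt^{\widehat\P}_{-\delta}]=Y_{-\delta}=0$ for $t\geq 0$ reduces to $\E^\P[\tilde Y_t]=\E^\P[g]=0$, which is precisely the calibration $\P\in\Qset^E_g$.

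This last step is the hard part --- and also the \emph{raison d'\^etre} of the fictitious extension to $[-\delta,0)$. The calibration $\E^\P[g]=0$ only holds in expectation, not conditionally on $\widehat\Filt^{\widehat\P}_0$, so any dynamic-trading lift of a calibrated $\P$ must allow $Y$ to jump from $0$ at $t=0-$ to its initial value at $t=0$, reflecting the fact that the calibrated options $g$ are purchased before the time-$0$ information is revealed.
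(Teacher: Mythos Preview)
Your argument is correct and follows the paper's proof essentially verbatim: project for the forward direction; for the converse, take the rcll martingale $\tilde Y_t=\E^\P[g\mid\Filt^\P_t]$, extend $(X,\tilde Y)$ by $(X_0,0)$ on $[-\delta,0)$, and push forward. Your extra care about the martingale identity across $t=0$ is a genuine addition---the paper only writes ``noting they remain martingales on $[-\delta,1]$''---and you correctly identify that it relies on $\sigma(X_0)$ being $\P$-trivial; this is not literally a ``standing convention'' at the point where the lemma is stated, but it is supplied by $E_1=\{\omega_0=x_0\}$ in Assumption~\ref{assumptionD}, which is in force wherever the lemma is actually applied.
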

\begin{proof}
    Let $\widehat\P\in\widehat\Qset^{\widehat E}$ and define $\P:=\widehat\P \circ \left((X_t)_{t\in [0,1]}\right)^{-1}$. Since a martingale remains a martingale in its natural filtration, we have $\P\in \Qset$. In addition, by definition of sets $\widehat E_1$, $\widehat E_2$, $\P\in \Qset^E$. Finally, since $Y$ is a $\widehat\P$-martingale with $Y_{-\delta} = 0$, we have $\E^\P[g(X(\omega))] = \E^{\widehat \P}[g(X(\widehat \omega))] =\E^{\widehat \P}[Y_1(\widehat \omega)]=0$ which establishes $\P\in \Qset^E_g$, as required. 
    To show the converse, fix $\P\in\Qset^E_g$. Define $g^0_t := \E^\P[g\mid \Filt^\P_t]$ and let $(g_t)_{t\in [0,1]}$ be the rcll martingale modification of $(g^0_t)_{t\in [0,1]}$. Note that $g_1=g(X)$ $\P$-a.s.\ and $\E^\P[g_0]=0$. Extend processes $X_\cdot$ and $g_\cdot$ to $[-\delta,1]$ via $X_t\equiv X_0$ and $g_t=0$ for $t\in [-\delta, 0)$, noting they remain martingales on $[-\delta, 1]$. We now define $\widehat \P := \P \circ (X_\cdot, g_\cdot)^{-1}$. The martingale property gives us $\widehat P\in \widehat \Qset$, by definition $\widehat\P(\widehat E_1)= \lambda\otimes\widehat\P (\widehat E_2)=1$ and $\widehat \P \circ \left((X_t)_{t\in [0,1]}\right)^{-1}=\P$, so that $\widehat \P\in \widehat\Qset^{\widehat E}$, as required.    
\end{proof}

Next, let us extend the notion of superhedging price to the extended model. This requires us to lift the American option payoff $Z$ to $\widehat \Omega$. On $[0,1]$ this is done via the obvious lift, $\widehat Z((t,\widehat\omega_{\cdot \land t})) := Z((t,X(\widehat \omega)_{|[0,t]}))$, but we also need to define the American payoff on $[-\delta, 0)$ and we do it in such a way that if $\widehat X$ is constant on $[-\delta,0)$ $\P$-a.s., then so is the American option payoff: $\widehat Z(t,\widehat \omega_{\cdot \land t}) := Z((0,X(\widehat \omega)_{t})), t\in [-\delta, 0)$. The superhedging price for $\widehat Z$ is defined in analogy to $\pi^A_{g,E}(Z)$. 
  
\begin{align*}
\widehat\pi^A_{\widehat E}(\widehat Z):=\inf\bigg\{ & x: \exists (q,\tilde q)\in C_b(\widehat\Lambda;\R^{d+m})\times C_b(\bar{\widehat\Lambda}_{\geq};\R^{d+m}),\ \text{s.t.}\  x+\int_{-\delta}^u q(t,\widehat\omega_{t\land \cdot})d\widehat X_t(\omega)\\
& \quad +\int_u^1 \tilde q^u(t,\widehat \omega_{t\land \cdot}) d\widehat X_t(\omega) \geq \widehat Z(u,\widehat \omega_{\cdot\wedge u}),\ \forall\, u\in[-\delta,1],\  \widehat\Qset^{\widehat E}(d\widehat\omega)\text{-\qs}, \bigg\}
\end{align*}

\begin{proposition}\label{prop:middleinequalities}
For $E_1\in \Borel(\Omega)$, $E_2\in \Borel(\Lambda)$, $Z\in C_b(\Lambda)$, $g\in C_b(\Omega; \R^m)$ such that $\Qset^E_g\neq \emptyset$, we have the following chain of inequalities:
\begin{gather}
\pi^A_{g,E}(Z) \geq \widehat\pi^A_{\widehat E} (\widehat Z)  \geq\sup_{\widehat\P\in \widehat\Qset^E, \widehat\tau\in\widehat\Tset^{\widehat\P}} \E^{\widehat\P} \widehat Z_{\widehat\tau} = \sup_{\bar{\widehat\P}\in\bar{\widehat\Qset}^E} \E^{\bar{\widehat\P}} \widehat Z (\theta, \widehat\omega_{\cdot \land \theta})
  \geq \sup_{\bar\P\in\bar\Qset^E_g} \E^{\bar\P} \bar Z.
\end{gather}
\end{proposition}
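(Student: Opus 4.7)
The plan is to split the claim into four links (A)--(D) matching the four comparisons in the chain, and handle each via a separate device: (A) rewrites a hedging strategy, (B) is weak duality in the lifted problem, (C) is a direct application of Theorem \ref{thm:americanconvexmtg}, and (D) lifts a pricing measure from $\bar\Qset^E_g$ to $\bar{\widehat\Qset}^{\widehat E}$.

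For \emph{step (A)}, given $(x,q,\tilde q,h)$ superhedging $Z$ in the original problem, I would build $(\widehat q,\widehat{\tilde q})\in C_b(\widehat\Lambda;\R^{d+m})\times C_b(\bar{\widehat\Lambda}_\geq;\R^{d+m})$ whose $X$-components are inherited from $q,\tilde q$ (continuously extended on $[-\delta,0)$ by the corresponding values at $t=0$) and whose $Y$-components are identically $h$. For exercise times $u\in[-\delta,0)$, set $\widehat{\tilde q}^u(t,\cdot)=\widehat{\tilde q}^0(t\vee 0,\cdot)$, which is jointly continuous in $(t,u)$. On $\widehat E_1$ both $X$ and $Y$ are constant on $[-\delta,0)$, so pathwise integrals there vanish, and the telescoping identity $\int_{-\delta}^1 h\, dY_t = h(Y_1-Y_{-\delta}) = hg(X)$ converts the original static position into a dynamic trade in $Y$. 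This reduces the lifted superhedging inequality at any $u\in[-\delta,1]$ to the original one at $u\vee 0$. Moreover, by Lemma \ref{lemmaliftXY}, every $\widehat\P\in\widehat\Qset^{\widehat E}$ has $\Omega$-marginal in $\Qset^E_g$, so the q.s.\ condition transfers correctly.

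\emph{Step (B)} is the extended-space analogue of Proposition \ref{prop:Americanweakduality}: in the lifted problem there are no statically traded options (the calibration is encoded in $\widehat E_1$ via $Y_{-\delta}=0$, $Y_1=g(X)$, together with the martingale property of $Y$), so taking $\widehat\P$-conditional expectations in the q.s.\ superhedging inequality, and using that bounded pathwise integrals against the square-integrable $\widehat\P$-martingale $\widehat X$ coincide with It\^o integrals of mean zero, yields the bound. \emph{Step (C)} is a direct application of Theorem \ref{thm:americanconvexmtg} to the canonical setup $(\widehat\Omega,\widehat X,\widehat\Qset^{\widehat E})$, which is legitimate since $\widehat Z\in C_b(\widehat\Lambda)$ is bounded. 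For \emph{step (D)}, given $\bar\P\in\bar\Qset^E_g$, I define $Y_t=\E^{\bar\P}[g\mid \bar\Filt_t^{\bar\P}]$ (rcll version) for $t\in[0,1]$ and extend by $Y_t=0$ on $[-\delta,0)$; since $\E^{\bar\P}[g]=0$, this $Y$ is a martingale on $[-\delta,1]$ under $\bar\P$. Let $\bar{\widehat\P}$ be the pushforward of $\bar\P$ under $(\vartheta,\omega)\mapsto(\vartheta,(\omega,Y))$. Then $\bar{\widehat\P}\in\bar{\widehat\Qset}^{\widehat E}$ follows directly: $\widehat X=(X,Y)$ remains a square integrable martingale (as $Y$ is bounded by $\|g\|_\infty$), the constraint on $\widehat E_1$ holds by construction, and the $E_2$-type condition on $X$ is inherited from $\bar\P$. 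Finally $\E^{\bar{\widehat\P}}[\widehat Z(\theta,\widehat\omega_{\cdot\land\theta})]=\E^{\bar\P}[\bar Z]$ because, for $\theta\in[0,1]$, $\widehat Z$ depends only on the $X$-coordinate.

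The \emph{main obstacle} I expect is in (A): ensuring continuity of $\widehat{\tilde q}^u$ across the interface $u=0$, $t=0$ while keeping the strategies globally bounded and consistent on all of $\bar{\widehat\Lambda}_\geq$. The construction yields the needed superhedging inequality \emph{on} $\widehat E_1$, where paths are constant on $[-\delta,0)$ and pathwise integrals over that stretch vanish; outside $\widehat E_1$ we only need continuity and boundedness, which is a mild requirement. A secondary subtle point, flagged in the remark following Theorem \ref{thm:americanconvexmtg}, is that Theorem \ref{thm:americanconvexmtg} cannot be applied directly to the constrained set $\bar\Qset^E_g$; this is precisely why the extension to the dynamic-options space is needed and why only an inequality (not an equality) appears at step (D).
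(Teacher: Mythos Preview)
Your proposal is correct and follows essentially the same route as the paper: the four links (A)--(D) correspond exactly to the paper's four steps, with (A) lifting the superhedge via a buy-and-hold position $h$ in $Y$ and freezing strategies on $[-\delta,0)$, (B) and (C) invoking Proposition~\ref{prop:Americanweakduality} and Theorem~\ref{thm:americanconvexmtg} in the extended space, and (D) lifting $\bar\P\in\bar\Qset^E_g$ to $\bar{\widehat\Qset}^{\widehat E}$ via the conditional-expectation process for $g$. The only cosmetic difference is that in (D) you condition on $\bar\Filt^{\bar\P}_t$ rather than on the $X$-filtration as in Lemma~\ref{lemmaliftXY}; this is harmless since the resulting $Y$ is still a bounded $\bar\FF^{\bar\P}$-martingale and the pushforward lands in $\bar{\widehat\Qset}^{\widehat E}$ for the same reasons.
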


\begin{proof}
First, focus on the inequality  $\pi^A_{g,E}(Z) \geq \widehat\pi^A_{\widehat E} (\widehat Z)$. Consider any superhedging strategy $(x,q,\tilde q,h)$ for $\pi^A_{g,E}$. We need to show that we can lift it to a superhedging strategy on $\widehat \Omega$ for $\widehat Z$. We do this by simply buying and holding $X$ on $[-\delta, 0]$, in accordance to the initial position of $q(0,X_{-\delta}(\widehat \omega))$ and then trading in $X$ according to $(q,\tilde q)$ on $[0,1]$ as functions of $(t,X(\widehat \omega)_{|[0,t]})$. 
If the option holder exercises on $[\delta,0)$ then we do not change the strategy and respond with $\tilde q^0$ from time $t=0$ onwards. 
We do no dynamic trading in $Y$ but only buy and hold $h$ of $Y$ on $[-\delta, 1]$. 
The superhedging property is easily obtained with a contradiction argument, noting that  for any $\widehat \P\in \widehat \Qset^{\widehat E}$, $X$ and $Z$ are constant on $[-\delta, 0]$ and $Y_{-\delta}=0$, $\widehat \P$-a.s., and using the first part of Lemma \ref{lemmaliftXY}. 


Next, $\widehat\pi^A_E (\widehat Z)  \geq\sup_{\widehat\P\in \widehat\Qset^E, \widehat\tau\in\widehat\Tset^{\widehat\P}} \E^{\widehat\P} \widehat Z_{\widehat\tau}$ is the weak duality analogous to Proposition \ref{prop:Americanweakduality}, but applied to the extended model. It is immediate to check that the proof extends to this setting. Note that $\Qset^E_g$ is non-empty and hence, by Lemma \ref{lemmaliftXY}, so is $\widehat\Qset^E$. Analogously, the equality $\sup_{\widehat\P\in \widehat\Qset^E, \widehat\tau\in\widehat\Tset^{\widehat\P}} \E^{\widehat\P} \widehat Z_{\widehat\tau} = \sup_{\bar{\widehat\P}\in\bar{\widehat\Qset}^E} \E^{\bar{\widehat\P}} \widehat Z (\theta, \widehat\omega_{\cdot \land \theta})$ follows from Theorem \ref{thm:americanconvexmtg}, also applied to the extended model.

Finally, recall that Lemma \ref{lemmaliftXY} states that every measure in $\Qset^E_g$ can be lifted to a measure in $\widehat\Qset^E$. Similarly, and with obvious definitions, the same arguments show that every measure in $\bar\Qset^E_g$ can be lifted to a measure in $\bar{\widehat\Qset}^E$. Since $\widehat Z$ only depends on $(t,\omega_{\cdot\wedge t})$, the last inequality follows:
$\sup_{\bar{\widehat\P}\in\bar{\widehat\Qset}^E} \E^{\bar{\widehat\P}} \widehat Z (\theta, \widehat\omega_{\cdot \land \theta})
  \geq \sup_{\bar\P\in\bar\Qset^E_g} \E^{\bar\P} Z (\theta, \omega_{\cdot \land \theta})$.
\end{proof}
\begin{remark}
    We note that the assumptions we made on $Z$ and $g$ were stronger than required for the results in this section. We made no use of continuity of $g$ and we used that it is bounded to conclude that its conditional expectations defined a square integrable martingale. In fact, all results in section \ref{sec:robusthedging} extend instantly to $g\in L^0(\Omega)$ with $\Qset^E_g$ redefined as 
    $$\left\{\P\in \Qset^E: \E^\P(g^2)<\infty\text{ and }\E^\P(g)=0\right\}$$
and assumed non-empty. Similarly, we did not use that $Z\in C_b(\Lambda)$ and it would have been enough to assume that $Z\in L^0(\Lambda)$, bounded from below and such that $t\to Z(t,\omega_{\cdot \land t})$ is rcll $\Qset^E(d\omega)$-q.s. We made stronger assumptions in the statements as this, we feel, streamlines the presentation allowing to match the assumptions to those made in the next section, where they are actually needed for the results, and to obtain the robust pricing-hedging duality. 
\end{remark}

\section{Robust pricing-hedging duality for American options}
\label{sec:robustPHduality}
The previous sections provided the connection between the robust prices of American options and the robust prices of European options in the enlarged space. In order to obtain duality for American options, we must first establish the robust pricing-hedging duality for European options in the enlarged space.

To this end, we restrict our probability space to continuous paths, i.e., $\Omega=C([0,1],\R^d)$. All associated objects, such as $\FF, \Qset, \bar\Omega, \bar\FF, \bar\Qset$ and so on, are defined in the same way as before.
In terms of the family of candidate models, we require some restrictions on the quadratic variation of the underlying asset.
Similarly to \cite{SonerTouziZhang}, define 
\begin{equation}\label{eq:defQV}
     \langle X\rangle_t := X_t X_t^\intercal - 2\int_0^t X_u dX_u, \quad \text{and}\quad \beta_t:= \limsup_{n\to \infty}\frac{\langle X\rangle_{t}- \langle X\rangle_{t-2^{-n}}}{2^{-n}}, \quad t\in (0,1],
\end{equation} 
with $\langle X\rangle_{0}=\beta_0=0$. We let $\Omega_{QV,t}\subset \Omega_t$ be the set of paths on which $(X\bcdot X)_{s\in [0,t]}$ in \eqref{eq:defSI} is the uniform limit of its approximations. 
By the above, $\Omega_{QV,t}\in \Filt_t$. Note that the set of absolutely continuous functions is a Borel subset of all continuous functions and hence the further subset $\Omega_{QV,t}^{AC}\subset \Omega_{QV,t}$ of paths on which $\langle X\rangle_{s\in [0,t]}$ is an absolutely continuous function is measurable, $\Omega_{QV,t}^{AC}\in \Filt_t$, $t\in (0,1]$. Similarly, $\beta_{s\in [0,t]}$ is a limsup of measurable functions (defined on $(2^{-n},1]$ and extended by continuity to $[0,1]$) and hence is measurable, so that the process $\beta$ is progressively measurable and we have $\langle X\rangle_1 = \int_0^1 \beta_t dt$ on $\Omega_{QV,1}^{AC}$.

If $X$ is a local martingale under some $\P\in \Pset$ then, by \cite[Thm.~2]{Karandikar}, we see that $\langle X\rangle$ is its quadratic variation (such a process being normally only defined $\P$-a.s.) and that for any $q\in C_b(\Lambda)$ the pathwise stochastic integral in \eqref{eq:defSI} coincides with the usual It\^o integral $\P$-a.s. In addition, since $\beta$ is progressively measurable and $\Omega_{QV,1}^{AC}\in\Filt_1$, we can define the set $\Qset$ of measures $\P\in \Pset$ such that $X$ is a square integrable $\P$-martingale and $\langle X\rangle_{\cdot}$ is absolutely continuous with $\beta_t\in \S^d_+$ for all $t\in [0,1]$, $\P$-a.s. Note that $X$ is a square integrable $\P$-martingale if and only if $X$ is a $\P$-local martingale with  $\E^\P\left(\int_0^1 |\beta_t|\, dt\right)< \infty.$ 

\newcommand{\volE}{\mathfrak{E}}
Often it is useful to restrict the diffusion characteristic $\beta$ of $X$ to take value in a subset that depends on $(t,\omega_{\cdot\wedge t})$. We impose the following market assumptions.
\begin{assumption}\label{assumptionD}
The initial asset prices are given by $x_0\in \R^d$. Volatility is restricted by a set-valued process $\volE:\Lambda\to\mathcal{B}(\S^d_+)$ with $\volE(t,\omega_{\cdot\wedge t})\subseteq \S^d_+$ closed, convex, and globally bounded for $(t,\omega_{\cdot\wedge t})\in \Lambda$ and such that 
$\volE$ is continuous with respect to the topology induced by the Hausdorff metric on compact subsets.
We let $g\in C_b(\Omega; \R^m)$,
$$E_1 = \{\omega\in\Omega: \omega_0=x_0\}, \quad E_2 = \{(t,\omega_{t\land \cdot}): \beta(t,\omega_{t\land \cdot})\in \volE(t,\omega_{t\land \cdot})\}
$$
and define the pricing measures via \eqref{eq:QsetEdef}. We assume that $\Qset^E_g\neq \emptyset$.
\end{assumption}

We now present the main robust pricing-hedging duality result for American options. We recall the notation associated to the dynamic lift, as introduced in section \ref{sectiondualitygap}. 
\begin{theorem}\label{thm:americanduality}
 Suppose Assumption \ref{assumptionD} holds and $Z\in C_b(\Lambda)$. Then we have equalities throughout in Proposition \ref{prop:middleinequalities} and in particular
\[
\pi^A_{g,E}(Z) = \sup_{\widehat\P\in \widehat\Qset^{\widehat E}, \widehat\tau\in\widehat\Tset^{\widehat\P}} \E^{\widehat\P} {\widehat Z}_{\widehat\tau}.
\] 
\end{theorem}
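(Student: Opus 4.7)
My plan is to close every inequality in the chain of Proposition \ref{prop:middleinequalities} by reducing them to the European pricing-hedging duality of Proposition \ref{prop:Euroduality} (to be established via OT-duality methods in the remainder of section \ref{sec:robustPHduality}), combined with two ingredients: (i) an identification of American superhedging on the base space with European superhedging on the corresponding enlarged space via $\vartheta$-splitting of hedging strategies, and (ii) a projection argument linking $\bar{\widehat\Omega}$-measures to $\bar\Omega$-measures. Since the chain is known to be decreasing, showing that its extreme ends are equal will force equality throughout.

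The central identification is that a continuous hedging strategy $\bar q\in C_b(\bar\Lambda;\R^d)$ on $\bar\Omega$ decomposes naturally via $\vartheta$: set the pre-stop piece $q(t,\omega_{\cdot\wedge t}):=\bar q(t,t\wedge\cdot,\omega_{\cdot\wedge t})\in C_b(\Lambda;\R^d)$, which acts while $\vartheta_t=t$, and the post-stop piece $\tilde q(t,u,\omega_{\cdot\wedge t}):=\bar q(t,u\wedge\cdot,\omega_{\cdot\wedge t})\in C_b(\bar\Lambda_\geq;\R^d)$ for $u<t$, which acts after the stop at $\vartheta_t=u$. These pieces are automatically compatible at the boundary $u=t$, and the converse gluing also works (possibly after a mild $\epsilon$-approximation to ensure joint continuity). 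For any path with $\vartheta_1=\theta$ the European inequality $x+\int_0^1\bar q_tdX_t+hg\geq\bar Z$ translates pathwise to the American inequality $x+\int_0^\theta q_tdX_t+\int_\theta^1\tilde q^\theta_tdX_t+hg\geq Z_\theta$. This yields $\pi^A_{g,E}(Z)=\bar\pi^E_{g,E}(\bar Z)$, the European superhedging price on $\bar\Omega$ with dynamic $X$ and static $g$. Proposition \ref{prop:Euroduality} then gives $\bar\pi^E_{g,E}(\bar Z)=\sup_{\bar\P\in\bar\Qset^E_g}\E^{\bar\P}\bar Z$, which pins the top and bottom of the chain together. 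The same argument on the extended model yields $\widehat\pi^A_{\widehat E}(\widehat Z)=\widehat\pi^E_{\bar{\widehat E}}(\bar{\widehat Z})=\sup_{\bar{\widehat\P}}\E^{\bar{\widehat\P}}\bar{\widehat Z}$, closing the middle.

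To verify that all these quantities agree, I would also close the bottom inequality directly. Given any $\bar{\widehat\P}\in\bar{\widehat\Qset}^{\widehat E}$, take its pushforward $\bar\P$ under $(\vartheta,X,Y)\mapsto(\vartheta,X|_{[0,1]})$. The calibration $\E^{\bar\P}[g]=\E^{\bar{\widehat\P}}[Y_1]=Y_{-\delta}=0$ follows from the martingale property of $Y$ under $\bar{\widehat\P}$, and the pathwise and martingale constraints on $X$ transfer directly, so $\bar\P\in\bar\Qset^E_g$ with $\E^{\bar\P}\bar Z=\E^{\bar{\widehat\P}}\bar{\widehat Z}$. This gives $\sup_{\bar\P}\geq\sup_{\bar{\widehat\P}}$, matching the reverse bound in Proposition \ref{prop:middleinequalities}. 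All five quantities in the chain now equal $\sup_{\bar\P\in\bar\Qset^E_g}\E^{\bar\P}\bar Z$, proving the theorem.

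The main obstacle is Proposition \ref{prop:Euroduality} itself: its proof via OT-duality requires delicate analysis on the enlarged space using the mixed topology $\mathfrak{T}_t$ (to exploit the identification of $C_b^*$ with $\Mset$), handling of the pathwise volatility constraint $\volE$ via an appropriate lower semicontinuity argument analogous to \cite{guo2021path,tan2013optimal}, and, in the extended setting, the time extension to $[-\delta,1]$ to accommodate possible discontinuity and non-trivial randomness of $Y_0$. A secondary but non-trivial subtlety is the gluing direction of the $\vartheta$-splitting: an American pair $(q,\tilde q)$ may fail to glue into a jointly continuous element of $C_b(\bar\Lambda;\R^d)$ without a careful matching at $u=t$, so the identification argument may require either an explicit smoothing across this boundary or a reformulation of the admissible strategy spaces in a compatible topology.
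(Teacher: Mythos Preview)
Your proposal follows essentially the same strategy as the paper: use the $\vartheta$-splitting of hedging strategies (this is Lemma \ref{lemeuroamer}) together with European duality on the enlarged space $\bar\Omega$ (this is Proposition \ref{enlargedduality}, not Proposition \ref{prop:Euroduality}, which is only stated on $\Omega$) to sandwich the chain of Proposition \ref{prop:middleinequalities}.

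One point worth noting: the paper's sandwich only requires the \emph{easy} direction of the $\vartheta$-splitting, namely $\bar\pi_{g,E}(\bar Z)\geq \pi^A_{g,E}(Z)$. Placing $\bar\pi_{g,E}(\bar Z)$ at the top of the chain and using Proposition \ref{enlargedduality} to identify it with $\sup_{\bar\P\in\bar\Qset^E_g}\E^{\bar\P}\bar Z$ at the bottom closes the loop without ever needing the converse gluing $(q,\tilde q)\mapsto \bar q$ that you correctly flag as delicate (continuity at $u=t$). So the ``secondary but non-trivial subtlety'' you mention is simply avoided. Likewise, your additional steps---closing the middle via a European duality on $\bar{\widehat\Omega}$ and closing the bottom via a pushforward projection---are redundant once the outer sandwich is in place; and the former would in fact be problematic, since $\widehat\Omega$ consists of rcll paths ($Y$ need not be continuous), so the OT-duality machinery of \cite{guo2021path}, which in section \ref{sec:robustPHduality} is set up on continuous path space, does not apply there directly.
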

\begin{remark}
In the case that there are no static European hedging instruments, Theorem \ref{thm:americanduality} reduces to a robust pricing-hedging duality of the form
\[
\pi^A_E(Z) = \sup_{\P\in \Qset^E, \tau\in\Tset^{\P}} \E^{\P} Z_\tau.
\]
While our result work for general restrictions $\volE$ on volatility, and this includes the special case of $\volE$ being a singleton. In particular, we can have $\Qset^E=\{\P\}$ be a singleton and recover the classical pricing-hedging duality of \cite{Myneni}, for the case of a bounded, continuous volatility model.
\end{remark}
The proof of this theorem identifies the superhedging price of an American option on $\Omega$ with that of a European option on an enlarged space $\bar\Omega$ and establishes pricing-hedging duality for the latter. This, in turn, is possible since the restrictions on the diffusion characteristic in Assumption \ref{assumptionD} allow for the application of the semimartingale optimal transport duality from \cite{guo2021path} to establish European option dualities. In particular, if a cost function returns $0$ when  $\beta_t\in \volE(t,\omega_{t\land \cdot})$ and returns $+\infty$ otherwise, then it would be convex, lower semicontinuous and coercive, while its convex conjugate would be continuous, which are the required properties.

\subsection{Duality for European options}

The financial interpretation of the setup introduced so far is standard. The process $X$ models price dynamics of $d$ assets traded continuously and in which investors are allowed to trade in a frictionless manner. We assume interest rates are deterministic and all prices are given in discounted units. Likewise, European or other payoffs, are expressed as functions of the discounted asset prices. 

We start with a brief summary of the relevant robust pricing and hedging results for European options. As recalled in the introduction, many variants of continuous time pricing-hedging duality exist and they differ by assumptions they impose, the hedging strategies they consider and the regularity of the option payoff they can cover. Here, we adopt the OT-driven framework of \cite{guo2021path}. We consider a function $g\in C_b(\Omega; \R^m)$ which represents a vector of liquidly traded payoffs shifted by their market prices, so that a pricing measure $\P$ is calibrated if $\E^\P(g) = 0$. We denote  $\Qset^E_g\subset\Qset^E$ the subset of calibrated pricing measures.

Suppose $\Qset^E_g$ represents all the pricing models we consider. Then for a European payoff $f$ its \emph{robust model price} is given by
\[
\sup_{\P\in\Qset^E_g} \E^\P f.
\]
On the other hand, we can consider a pricing-via-hedging argument, and look for the \emph{superhedging price} defined as the smallest initial capital such that it is possible to construct a portfolio which superreplicates the payoff $f$. Portfolios to consider can trade dynamically in the stock and also statically (buy-and-hold at time $t=0$) in the vector $g$ of options with fixed market prices. In addition, the superreplication property needs to hold $\Qset^E_g$ quasi-surely (q.s.), that is $\P$-a.s., for all $\P\in \Qset^E_g$. So the superhedging price is given by
\begin{equation}\label{eq:EuroSH}
    \pi_{g,E}(f):=\inf\left\{x: \exists (q,h)\in C_b(\Lambda;\R^d)\times \R^m ,\  \text{s.t.}\  x+\int_0^1 q_t dX_t + h g \geq f,\ \Qset^E_g\text{-\qs}\right\}.
\end{equation}
The key result, known as the \emph{robust pricing-hedging duality}, asserts that these two approaches to computing the robust price for $f$ are consistent and give the same result. 
\begin{proposition}[\cite{guo2021path}] 
\label{prop:Euroduality}
Suppose Assumption \ref{assumptionD} holds. Then for any $f\in C_b(\Omega)$, 
$\pi_{g,E}(f) = \sup_{\P\in\Qset^E_g} \E^\P [f]$.
\end{proposition}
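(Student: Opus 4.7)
The proof plan splits into weak and strong duality. The weak inequality is a one-line martingale computation; the strong (non-trivial) inequality is obtained by casting the primal problem as a semimartingale optimal transport (SOT) problem and invoking the main duality theorem of~\cite{guo2021path}.

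For weak duality, fix a superhedging triple $(x,q,h)$ from~\eqref{eq:EuroSH} and $\P\in \Qset^E_g$. Since $\volE$ is globally bounded and $q\in C_b(\Lambda;\R^d)$, the pathwise integral $\int_0^1 q_t dX_t$ from~\eqref{eq:defSI} agrees $\P$-a.s.\ with the It\^o integral and defines a true $\P$-martingale with zero mean (by~\cite[Thm.~3]{Karandikar} together with $\E^\P\int_0^1|\beta_t|dt<\infty$). Since also $\E^\P[h\cdot g]=h\cdot \E^\P[g]=0$, taking expectations in the superhedging inequality yields $x\geq \E^\P[f]$; infimising in $(x,q,h)$ and supremising in $\P$ gives $\pi_{g,E}(f)\geq \sup_{\P\in \Qset^E_g}\E^\P[f]$.

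For the reverse inequality, I would encode the volatility constraint as a cost function $F:\Lambda \times \S^d\to[0,+\infty]$ defined by $F(t,\omega,\beta):=0$ if $\beta\in \volE(t,\omega_{\cdot\wedge t})$ and $F(t,\omega,\beta):=+\infty$ otherwise. Assumption~\ref{assumptionD} -- closedness, convexity, uniform boundedness and Hausdorff-continuity of $\volE$ -- makes $F$ jointly lower semicontinuous, convex and coercive in $\beta$, while its convex conjugate $F^*(t,\omega,a)=\sup_{\beta\in \volE(t,\omega_{\cdot\wedge t})} a\!:\!\beta$ is finite, jointly continuous on $\Lambda \times \S^d$ and sublinear in $a$. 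These are precisely the structural hypotheses under which~\cite{guo2021path} establishes SOT duality. For each fixed $h\in \R^m$, I apply that duality to the payoff $f-h\cdot g$ over semimartingale measures $\P$ with $\P(\omega_0=x_0)=1$ and finite $F$-cost: finiteness of the $F$-integral forces $\beta\in \volE$ $\P$-a.s., so the primal equals $\sup_{\P\in \Qset^E}\E^\P[f-h\cdot g]$; on the dual side, choosing the dual path-variable $a\equiv 0$ (permissible since $F^*(t,\omega,0)=0$) recovers the pathwise superhedging inequality $x+(q\bcdot X)_1\geq f-h\cdot g$, $\Qset^E$-q.s.

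Taking the infimum in $h$ on both sides and using Sion's minimax theorem to swap with the supremum over $\P\in \Qset^E$ on the primal side yields
\begin{align*}
\pi_{g,E}(f) &= \inf_{h\in\R^m} \sup_{\P\in \Qset^E} \E^\P[f-h\cdot g] = \sup_{\P\in \Qset^E}\inf_{h\in\R^m}\E^\P[f-h\cdot g] = \sup_{\P\in \Qset^E_g} \E^\P[f],
\end{align*}
where the last equality uses that $\inf_{h\in \R^m}(-h\cdot \E^\P g)$ equals $0$ if $\E^\P g=0$ and $-\infty$ otherwise. The main obstacle is the minimax interchange in $h$: one needs convexity of $\Qset^E$ (clear, as $\volE$ is convex-valued) and relative compactness of the level sets of $\P\mapsto \E^\P[f-h\cdot g]$ in a topology compatible with the linear functional $\P\mapsto \E^\P g$. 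Both follow from the uniform boundedness of $\volE$, which gives uniform bounds on $\langle X\rangle_1$ and hence tightness of $\Qset^E$, combined with the mixed-topology $\mathfrak{T}_t$ set-up on $C_b(\Omega)$ recalled in Section~\ref{secprelim}; the deeper infinite-dimensional Hahn-Banach separation of $\Mset(\Omega)$ is already carried out in~\cite{guo2021path}, so the proposition reduces to this finite-dimensional Lagrangian step once the SOT duality is in hand.
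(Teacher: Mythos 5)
The paper does not reprove this proposition -- it quotes it from \cite{guo2021path} -- but its proof of the analogous statement on the enlarged space (Proposition \ref{enlargedduality}) follows exactly the architecture you propose: weak duality by taking expectations, and strong duality by encoding the constraint through the indicator cost of $\mathfrak{E}$, whose conjugate is the support function $\sup_{\beta\in\mathfrak{E}}a\!:\!\beta$, and invoking the semimartingale optimal transport duality of \cite{guo2021path}. Your weak-duality half is correct and identical in substance. The problem is with how you close the strong-duality half. The dual in \cite{guo2021path}, as used in \eqref{eqthmduality}--\eqref{eqthmduality032}, is not a family of pathwise superhedging inequalities indexed by a free ``path-variable $a$'': the dual variables are $h\in\R^m$ together with smooth functionals $\phi\in C^{1,1,2}$ satisfying a path-dependent HJB-type subsolution inequality in which the conjugate is evaluated at $\tfrac12\Dxx\phi$, which you cannot set to zero. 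The passage from such a $\phi$ to a superhedge $q\in C_b(\Lambda;\R^d)$ is exactly the functional It\^o formula step \eqref{eqfunctionalito01}: apply it under each admissible $\P$, use the subsolution property to discard the drift and quadratic-variation terms, and read off $q=\Dx\phi$. Your ``choose $a\equiv 0$'' step skips this conversion, which is the crux of turning the analytic dual into the financial superhedging statement; as written it is a genuine gap.

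The second difference is your handling of the static options by an outer Lagrangian in $h$ and Sion's minimax. This is a legitimately different route from the paper's, but it carries an unpaid cost: the interchange $\inf_h\sup_{\P\in\Qset^E}=\sup_{\P\in\Qset^E}\inf_h$ needs $\Qset^E$ to be convex and compact in a topology for which $\P\mapsto\E^\P[f-hg]$ is upper semicontinuous. Convexity and tightness are easy (the volatility set is uniformly bounded), but closedness -- stability under weak limits of the martingale property and of the constraint $\beta\in\mathfrak{E}(t,\omega_{\cdot\wedge t})$ -- is a nontrivial lemma relying on convexity, closedness and Hausdorff-continuity of $\mathfrak{E}$, and you only assert it. The paper avoids this issue altogether because the duality of \cite{guo2021path} is stated with the finitely many linear calibration constraints built in, so the infimum over $h$ already appears inside the dual \eqref{eqthmduality} and no minimax interchange is required. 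If you replace the ``$a\equiv 0$'' step by the functional It\^o argument and either prove the compactness/closedness of $\Qset^E$ or use the constrained form of the duality directly, your proof becomes complete and essentially coincides with the paper's.
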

Note that we have restricted the hedging strategies to $q\in C_b(\Lambda;\R^d)$ and, in particular, both sides of the superhedging inequality in the definition of $\pi_{g,E}(f)$ make sense pathwise on $\Omega$ thanks to \eqref{eq:defSI}. Restricting to continuous integrands is sufficient since we only consider payoffs $f\in C_b(\Omega)$. For more general $f$'s, we would need a larger class of hedging strategies, e.g., \cite{DenisMartini,OblojZhou}. This has been researched in detail in the past, see the discussion in section \ref{sec:intro} above. On the other hand, using only $q\in C_b(\Lambda;\R^d)$ means we do not have to worry about fine questions of admissibility: since $q$ is bounded, the stochastic integral $\int_0^t q_s dX_s$, $t\in [0,1]$, is a square integrable martingale under any $\P\in \Qset$, and in particular its expectation is zero. 



As shown in the previous section, the robust price of an American option is equal to the robust price of the corresponding European option in the enlarged space.
Hence, in order to establish dualities for American options, we will first establish dualities for European options in the enlarged space, in an analogue of Proposition \ref{prop:Euroduality}.

\subsection{Duality for European options in the enlarged space}

Under any probability measure $\P$ on $\bar \Omega$, $\vartheta$ is a semimartingale with characteristics $(\I_{[0,\theta]}(t), 0)$ and the integral $t \to \int_0^t q_s d\vartheta_s$ is defined pathwise on $\bar \Omega$ for $q\in C(\bar\Lambda)$. 
 We let $\bar \Qset$ be the set of probability measures $\P$ on $\bar\Omega$ such that $\P$ restricted to $(\Omega, \FF_1)$ is in $\Qset$ and such that $X$ is a $(\P,\bar \FF)$-martingale. We recall that $\langle X\rangle$ and $\beta$ were defined pathwise in \eqref{eq:defQV} and these definitions extend to $\bar \Omega$ with 
$\langle X\rangle(\bar \omega) = \langle X\rangle (\omega)$ and hence $\beta(\bar \omega) = \beta(\omega)$. 
For restrictions on $\beta$, we consider Assumption \ref{assumptionD}. Note that $\vartheta$ automatically has a diffusion characteristic of 0, and we also don't allow $\volE$ to depend on $\vartheta$. Then the set $\bar \Qset^E\subset \bar Q$, of martingale measures where $\beta\in \volE$, and its subset $\bar \Qset^E_g\subset \bar \Qset^E$ of calibrated measures, are clearly defined. 

The definition of the \emph{superhedging price} in \eqref{eq:EuroSH} naturally extends to $\bar \Omega$:
\[
\bar\pi_{g,E}(f):=\inf\{x: \exists (q,h)\in C_b(\bar\Lambda;\R^d) \times \R^m,\  \text{s.t.}\  x+\int_0^1 q_t dX_t + h g \geq f,\ \bar\Qset^E_g\text{-\qs}\}.
\]
When the vector $g$ is empty we will simply write $\bar\pi_E(f)$.

\begin{proposition}\label{enlargedduality}
Suppose Assumption \ref{assumptionD} holds and $f\in C_b(\bar\Omega)$.
Then
$\bar\pi_{g,E}(f) = \sup_{\bar\P\in\bar\Qset^E_g} \E^{\bar\P} f$.
\end{proposition}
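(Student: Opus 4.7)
The plan is to obtain weak duality by a routine linearity argument and then derive strong duality by reducing the problem to an instance of the semimartingale optimal transport duality proved in \cite{guo2021path}, applied now on the enlarged space $\bar\Omega$. For weak duality $\bar\pi_{g,E}(f)\geq \sup_{\bar\P\in\bar\Qset^E_g}\E^{\bar\P}f$, I would fix any admissible triple $(x,q,h)\in \R\times C_b(\bar\Lambda;\R^d)\times \R^m$ and any $\bar\P\in\bar\Qset^E_g$. Under $\bar\P$, the canonical $X$ is a square integrable martingale in $\bar\FF$ and, since $q$ is bounded, the pathwise integral $\int_0^1 q_t dX_t$ from \eqref{eq:defSI} coincides $\bar\P$-a.s.\ with its classical It\^o counterpart and is a zero-mean square integrable martingale; taking $\bar\P$-expectation of the superhedging inequality and using $\E^{\bar\P}g=0$ then yields $x\geq \E^{\bar\P}f$, and infimising over $x$ closes weak duality.

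For strong duality, I would mimic the OT-based argument of \cite{guo2021path} with $\Omega$ replaced by $\bar\Omega$ and the canonical process $X$ replaced by $\bar X=(\vartheta,X)$. The key structural observation is that by the very definition of $\bar\Omega=\Theta\times \Omega$, every path $\bar\omega$ has a $\vartheta$-component that is continuous, non-decreasing, $1$-Lipschitz, starts at $0$ and is constant after some $\theta\in[0,1]$; hence $\vartheta$ is automatically a finite variation process with zero quadratic variation and zero cross-variation with $X$, so the only non-trivial constraint to impose on the characteristics of $\bar X$ is on the $X$-block of the diffusion matrix. I would therefore introduce the cost function
\[
L(t,\bar\omega,a):=\begin{cases} 0, & a=\begin{pmatrix} 0 & 0 \\ 0 & \alpha\end{pmatrix},\ \alpha\in \volE(t,\omega_{\cdot\wedge t}),\\ +\infty, & \text{otherwise},\end{cases}
\]
on $\bar\Lambda\times \S^{d+1}_+$. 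By Assumption \ref{assumptionD}, $L$ is convex, lower semicontinuous and coercive, while its convex conjugate $L^*(t,\bar\omega,q)=\sigma_{\volE(t,\omega_{\cdot\wedge t})}(q_{XX})$ — the support function of $\volE$ applied to the $X$-block of $q$ — is continuous, since $\volE$ takes values in bounded closed convex sets and is Hausdorff-continuous. This is exactly the setting for \cite{guo2021path}: the dual variables $(q,h)\in C_b(\bar\Lambda;\R^d)\times \R^m$ encode dynamic hedging in $X$ and static positions in $g$, and the duality gap is closed via Hahn-Banach / Fenchel-Moreau on $C_b(\bar\Omega)$ equipped with the mixed topology $\mathfrak{T}_t$, using the identification $C_b(\bar\Omega)^*=\Mset(\bar\Omega)$ recalled at the start of the paper.

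The main obstacle, in my view, is the verification that the OT duality of \cite{guo2021path}, originally stated on $\Omega=C([0,1],\R^d)$, transfers cleanly to $\bar\Omega$. Several technical points need checking: (i) one must ensure that leaving the $\vartheta$-drift unconstrained in $L$ does not destroy coercivity or the tightness arguments underpinning the duality, using that the $\vartheta$-drift is automatically bounded in $[0,1]$ by the structure of $\bar\Omega$; (ii) one must track the progressive-measurability of $q\in C_b(\bar\Lambda;\R^d)$ with respect to the enlarged filtration and confirm that pathwise integration in the sense of \eqref{eq:defSI} against the continuous semimartingale $X$ on $\bar\Omega$ behaves as expected, since hedging is only in the $X$-coordinate; (iii) one must check that the set constraints in $E=(E_1,E_2)$ lift to $\bar\Omega$ in a way compatible with \cite{guo2021path}, and that the calibration constraint $\E^{\bar\P}g=0$ enters via the finite-dimensional dual variable $h\in \R^m$ so that the optimal primal measures are precisely those in $\bar\Qset^E_g$. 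Once these technical issues are handled, the proof is a direct adaptation of the argument in \cite{guo2021path}.
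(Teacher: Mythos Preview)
Your proposal is correct and follows essentially the same route as the paper: establish weak duality by taking expectations, then invoke the semimartingale OT duality of \cite{guo2021path} on the enlarged space $\bar\Omega$ with a $\{0,+\infty\}$-valued cost encoding the constraint $\beta\in\volE$. The only difference is presentational --- the paper writes the dual as an HJB-type problem over test functions $\phi\in C^{1,1,2}(\bar\Lambda)$ and then uses the functional It\^o formula to extract the superhedge $q=\nabla_x\phi\in C_b(\bar\Lambda;\R^d)$, whereas you fold this step into the phrase ``the dual variables $(q,h)$ encode dynamic hedging'' and list it among the technical verifications.
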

\begin{proof}
By taking expectations, it is easy to check that
\[
\bar\pi_{g,E}(f) \geq \sup_{\bar\P\in\bar\Qset^E_g} \E^{\bar\P} f.
\]
To show the reverse inequality we apply the main duality result of \cite{guo2021path} on the enlarged spaces $\bar\Omega$ and $\bar\Lambda$, with a cost function that 
equals 0 if $\beta(t,\omega_{\cdot\wedge t})\in \volE(t,\omega_{\cdot\wedge t})$, or equals infinity otherwise.
The duality gives us that
\begin{gather}
\sup_{\bar\P\in\bar\Qset^E_g} \E^{\bar\P} f = \inf_{h\in\R^m,\phi\in C^{1,1,2}_0(\bar\Lambda)}  \phi(0,0,X_0),\label{eqthmduality}\\
\text{subject to}\quad \phi(1,\cdot,\cdot)\geq f-hg \quad \text{and}  \quad \Dt\phi + \I(t\leq\theta) \Dth \phi + \sup_{\beta\in \volE} \frac{1}{2}\beta: \Dxx\phi\leq 0, \label{eqthmduality032}
\end{gather}
where the set of test functions is defined as follows.  
We say $\phi\in C^{1,1,2}(\bar\Lambda)$ if $\phi\in C_b(\bar\Lambda)$ and there exist functions $(\Dt \phi, \Dth \phi,\Dx \phi, \Dxx \phi)\in C_b(\bar\Lambda;\R^{d+2}\times \S^d)$ such that, for any $\bar\P\in\bar\Qset$ and $u\in[0,1]$, the following \emph{functional It\^o formula} holds:
\begin{align}\label{eqfunctionalito01}
\phi(u,\vartheta,X)-\phi(0,\vartheta,X)&=\int_0^u \Dt \phi\, ds+ \Dth \phi\,  d\vartheta_s+ \Dx \phi \, dX_s + \frac{1}{2} \Dxx \phi : d\langle X\rangle_s\\
&=\int_0^u (\Dt \phi+  \I_{[0,\theta]} \Dth \phi)\, ds+ \Dx \phi \, dX_s + \frac{1}{2} \Dxx \phi : d\langle X\rangle_s, \quad \bar\P\text{-\as} \nonumber
\end{align}
From this definition, it follows directly that for each $\phi$ satisfying \eqref{eqthmduality032} and each $\bar\P\in \bar\Qset^E$, the following holds $\bar\P$-\as
\begin{align*}
f-hg -\phi(0,0,X_0) \leq & \ \phi(1,\cdot,\cdot)-\phi(0,0,X_0)\\
&=\int_0^1 (\Dt \phi+ \I_{[0,\theta]} \Dth \phi+ \,\frac{1}{2} \beta^\P:\Dxx \phi) dt+ \Dx \phi \, dX_t\\
&\leq \int_0^1 \Dx \phi \, dX_t.
\end{align*}
As $\Dx \phi\in C_b(\bar\Lambda;\R^d)$ is an admissible hedging strategy, we deduce that $\phi(0,0,X_0) \geq \bar\pi_{g,E}(f)$. Since this holds for all $\phi$ satisfying \eqref{eqthmduality032}, it implies $ \sup_{\bar\P\in\bar\Qset^E_g} \E^{\bar\P} f \geq \bar\pi_{g,E}(f)$, completing the proof.
\end{proof}

\begin{remark}
   We defined the set $C^{1,1,2}(\bar\Lambda)$ by requiring that the functional It\^o formula holds for all $\bar\P\in \bar\Qset$, which is a smaller set than the set of all  semimartingale measures with integrable characteristics used in \cite{guo2021path}. This means that the equality in \eqref{eqthmduality} shown therein could, \emph{a priori}, turn to an inequality ``LHS$\geq$ RHS" with our definition. However, the rest of the proof then shows that the equality in fact holds through the sandwiching with the first inequality.  
\end{remark}

\begin{remark}\label{rk:hedgingstrategies}
    It follows from the proof that in the definition of the superhedging price, we can restrict further the hedging strategies and instead of $(q,h)\in C_b(\bar\Lambda;\R^d) \times \R^m$ it suffices to consider $(q,h)\in C_x(\bar\Lambda;\R^d) \times \R^m$, where
       $$C_x(\bar\Lambda;\R^d)= \left\{ \Dx \phi : \phi\in C^{1,1,2}(\bar\Lambda)\right\}.
       $$
    We also note that for such $(q,h)\in C_x(\bar\Lambda;\R^d) \times \R^m$, the functional It\^o formula can be used to \emph{define} the integral $\int_0^{\cdot} q_s dX_s$ pathwise on $\bar\Omega$. This definition would potentially differ from \eqref{eq:defSI} but the two agree $\bar\P$-a.s., for any $\bar\P\in \bar\Qset$. 

 \end{remark}


\subsection{Duality for American options}

Finally, we can return to the problem of superhedging of American options and link it to the superhedging for European options in the enlarged space. Recall that for $Z\in C_b(\Lambda)$ we write $\bar Z(\bar \omega) = Z(\theta,\omega_{\cdot \land \theta})$. 
Following Remark \ref{rk:hedgingstrategies} we observe that in the setup of Proposition \ref{enlargedduality} we have 
\begin{align*}
\bar\pi_{g,E}(\bar Z)&=\inf\{x: \exists (\bar q,\bar h)\in C_x(\bar\Lambda) \times \R^m,\  \text{s.t.}\  x+\int_0^1 \bar q_t dX_t + \bar h g \geq \bar Z,\ \bar\Qset^E_g\text{-\qs}\}.
\end{align*}

We then have the following inequality.
\begin{lemma}\label{lemeuroamer} Suppose Assumption \ref{assumptionD} holds and $Z\in C_b(\Lambda)$. Then $\bar\pi_{g,E}(\bar Z) \geq \pi^A_{g,E}(Z)$.  
\end{lemma}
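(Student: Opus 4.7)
The plan is to take an arbitrary superhedging strategy $(x,\bar q,\bar h)$ for $\bar Z$ on $\bar\Omega$ with $\bar q\in C_x(\bar\Lambda;\R^d)\subseteq C_b(\bar\Lambda;\R^d)$ and construct an American superhedging strategy $(x,q,\tilde q,\bar h)$ with the same initial capital. Exploiting the fact that $\Theta\ni\vartheta$ is fully determined by $\theta\in[0,1]$, I would set
\[
q(t,\omega_{\cdot\wedge t}):=\bar q\bigl(t,(s\wedge t)_{s\in[0,t]},\omega_{\cdot\wedge t}\bigr),\qquad \tilde q^u(t,\omega_{\cdot\wedge t}):=\bar q\bigl(t,(s\wedge u)_{s\in[0,t]},\omega_{\cdot\wedge t}\bigr),\ u\leq t,
\]
i.e.\ $q$ plugs into $\bar q$ the ``unstopped'' $\vartheta$-path (as if $\theta\geq t$) and $\tilde q^u$ plugs in the path corresponding to $\theta=u$. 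Both are continuous compositions of continuous maps and inherit the uniform bound of $\bar q$, so $(q,\tilde q)\in C_b(\Lambda;\R^d)\times C_b(\bar\Lambda_\geq;\R^d)$.

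To verify the superhedging property, fix $\P\in\Qset^E_g$ and an exercise time $u\in[0,1]$, and lift $\P$ to $\bar\P:=\delta_{(s\wedge u)_s}\otimes\P\in\Pset(\bar\Omega)$. Then $\bar\P\in\bar\Qset^E_g$: the $\vartheta$-coordinate is $\bar\P$-a.s.\ deterministic so the enlarged filtration $\bar\FF$ carries no new information, whence $X$ remains a square integrable $\bar\P$-martingale; the diffusion characteristic $\beta(\bar\omega)=\beta(\omega)$ only depends on $\omega$, so the constraint $\beta\in\volE$ is preserved; and $\E^{\bar\P}[g]=\E^\P[g]=0$. Applying the superhedging inequality at $\bar\P$, on the set of full $\bar\P$-measure one has $\bar Z(\bar\omega)=Z(u,\omega_{\cdot\wedge u})$, and along $\bar\omega=((s\wedge u)_s,\omega)$ the integrand coincides with $q(t,\omega_{\cdot\wedge t})$ for $t<u$ and with $\tilde q^u(t,\omega_{\cdot\wedge t})$ for $t\geq u$ (the values glue continuously at $t=u$ since $(s\wedge u)_{s\in[0,u]}=(s\wedge u\wedge u)_{s\in[0,u]}$). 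Since each of the relevant pathwise Karandikar integrals agrees almost surely with the corresponding It\^o integral under $\bar\P$ (resp.\ under $\P$), one has $\bar\P$-a.s.
\[
\int_0^1 \bar q_t\,dX_t=\int_0^u q_t\,dX_t+\int_u^1 \tilde q^u_t\,dX_t,
\]
so the $\bar Z$-superhedging inequality transfers to $\P$-a.s.
\[
x+\int_0^u q_t\,dX_t+\int_u^1 \tilde q^u_t\,dX_t+\bar h\,g\geq Z(u,\omega_{\cdot\wedge u}).
\]
As $\P$ and $u$ were arbitrary, $(x,q,\tilde q,\bar h)$ satisfies the American superhedging condition, yielding $\pi^A_{g,E}(Z)\leq x$, and taking the infimum over admissible $x$ gives the lemma.

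The main technical obstacle I anticipate is the identification of the pathwise stochastic integrals on $\bar\Omega$ and on $\Omega$ along the specific family of paths $\bar\omega=((s\wedge u)_s,\omega)$: this is done via the Karandikar--It\^o coincidence under both $\bar\P$ and $\P$, but one must first verify continuity of the integrand at the junction $t=u$ and check that the frozen lift $\bar\P$ genuinely lives in $\bar\Qset^E_g$. A secondary subtlety is that the American definition asks for the inequality for all $u\in[0,1]$ quasi-surely: the argument above gives it for each $u$, $\Qset^E_g$-q.s., which is the natural reading; an upgrade to a uniform-in-$u$ full-measure set, if needed, follows by right-continuity in $u$ of both sides and restricting to a countable dense set.
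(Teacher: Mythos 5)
Your construction of $(q,\tilde q,h)$ from $(\bar q,\bar h)$, and the verification via the lift $\bar\P=\delta_{u\wedge\cdot}\otimes\P\in\bar\Qset^E_g$, is exactly the paper's argument, and everything up to ``for each fixed $u$ the inequality holds $\P$-a.s.'' is sound. The gap is in what you dismiss as a ``secondary subtlety''. First, the upgrade to a single full-measure set valid for all $u\in[0,1]$ is not optional: the ``for each $u$, a.s.'' version would not protect the seller against an $\omega$-dependent exercise time, and the paper consistently uses the uniform reading (compare the proof of Lemma~\ref{lem:simpleAhedge}, which explicitly passes to ``for all $u\in[0,1]$, $\P$-a.s.'' via right-continuity). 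Second, and more importantly, the right-continuity in $u$ of the portfolio value $u\mapsto \int_0^u q\,dX_t+\int_u^1\tilde q^u\,dX_t=\int_0^1\bar q(t,u\wedge t,\cdot)\,dX_t$ that you invoke to perform the upgrade is precisely the nontrivial claim, not a given. For each fixed $u$ the stochastic integral is identified with its It\^o version only up to a $u$-dependent null set, and continuity of the integrand in the parameter $u$ does not by itself yield a.s.\ continuity of the integral in $u$ simultaneously for all $u$.

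This is exactly where the restriction $\bar q\in C_x(\bar\Lambda;\R^d)$ must actually be used, and your proposal records the restriction but never exploits it. Writing $\bar q=\Dx\phi$ with $\phi\in C^{1,1,2}(\bar\Lambda)$, the functional It\^o formula \eqref{eqfunctionalito01} expresses $\int_0^1\bar q(t,u\wedge t,\cdot)\,dX_t$ as $\phi(1,u,\cdot)-\phi(0,u,\cdot)$ minus Lebesgue and quadratic-variation integrals of $\Dt\phi$, $\I_{[0,u]}\Dth\phi$ and $\Dxx\phi$; since these are bounded and continuous and $\langle X\rangle$ has a bounded density under any $\P\in\Qset^E_g$ (Assumption~\ref{assumptionD}), the right-hand side is continuous in $u$ on a single $\P$-full set. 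Combined with the continuity of $u\mapsto Z(u,\omega_{\cdot\wedge u})$, this lets one pass from rational $u$ to all $u$. With this step supplied, your argument coincides with the paper's proof.
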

\begin{proof}Suppose $(\bar q, \bar h)$ is a superhedge for $\bar\pi_{g,E}(\bar Z)$.
We construct the following superhedge $(q,\tilde q,h)$ for $\pi^A_{g,E}(Z)$, noting that $\tilde q^\theta(t,\cdot)$ is only relevant for $t\geq \theta$.
\begin{align*}
q(t,\omega_{\cdot\wedge t}) &= \bar q(t,t,\omega_{\cdot\wedge t}),\\
\tilde q^\theta(t,\omega_{\cdot\wedge t}) & = \bar q(t,\theta,\omega_{\cdot\wedge t}),\quad t\geq \theta,\\
h &= \bar h.
\end{align*}
It remains to check that $(q,\tilde q,h)$ is indeed a superhedge for $\pi^A_{g,E}(Z)$.
For each rational $u\in[0,1]$ and $\P\in\Qset^E_g$, we can define $\bar\P=\delta_u\times \P$. It is easy to check that $\bar\P \in \bar\Qset^E_g$.
Hence
\[
x+\int_0^u q dX_t+\int_u^1 \tilde q^u dX_t + h g \geq Z_u, \quad \P\text{-\as}.
\]
To extend from rational $u$ to all reals, first note that $Z_u$ is continuous in $u$. On the other side, by the definition of $q$ and $\tilde q^u$,
\begin{align*}
\int_0^u q dX_t+\int_u^1 \tilde q^u dX_t = \int_0^1 \bar q(t,u\wedge t,\cdot) dX_t.
\end{align*}
Recall $\bar q\in C_x(\bar\Lambda)$, so there exists $\phi\in C^{1,1,2}(\bar\Lambda)$ such that $\bar q=\nabla_x \phi$. Then by the functional It\^o formula in \eqref{eqfunctionalito01}, 
\begin{align*}
\int_0^1 \bar q(t,u\wedge t,\cdot) dX_t = \phi(1,u,\cdot)-\phi(0,u,\cdot) - \int_0^1 (\Dt \phi+  \I_{[0,u]} \Dth \phi)\, ds + \frac{1}{2} \Dxx \phi : d\langle X\rangle_s, \quad \P\text{-\as}
\end{align*}
Since $\phi, \Dt\phi,\Dth\phi, \Dxx\phi$ are all bounded and continuous, and $X$ has a bounded diffusion characteristic under $\P$, the right hand side, and hence also the left hand side, is continuous in $u$.
Thus
\[
x+\int_0^u q dX_t+\int_u^1 \tilde q^u dX_t + h g \geq Z_u,\quad \forall\,u\in[0,1], \quad\P\text{-\as}
\]
Since this holds for all $\P\in\Qset^E_g$, it must be a $\pi^A_{g,E}(Z)$ superhedge. Thus $\bar\pi_{g,E}(\bar Z) \geq \pi^A_{g,E}(Z)$.
\end{proof}

%
%

\begin{proof}[Proof of Theorem \ref{thm:americanduality}]
The required result follows from the following chain of inequalities
\begin{gather}\label{eqamericanduality1}
\bar\pi_{g}(\bar Z) \geq \pi^A_{g}(Z) \geq \widehat\pi^A (\widehat Z)  \geq\sup_{\widehat\P\in \widehat\Qset^E, \widehat\tau\in\widehat\Tset^{\widehat\P}} \E^{\widehat\P} \widehat Z_{\widehat\tau} = \sup_{\bar{\widehat\P}\in\bar{\widehat\Qset}^E} \E^{\bar{\widehat\P}} \bar{\widehat Z}
  \geq \sup_{\bar\P\in\bar\Qset^E_g} \E^{\bar\P} \bar Z = \bar\pi_g(\bar Z),
\end{gather}
which implies that there must be equality throughout. We will now justify each step.

The first inequality $\bar\pi_g(\bar Z) \geq \pi^A_g(Z)$ was just established in Lemma \ref{lemeuroamer}, while the last equality $\bar\pi_g(\bar Z) = \sup_{\bar\P\in\bar\Qset^E_g} \E^{\bar\P} \bar Z$ is the European duality in the enlarged space, proven in Proposition \ref{enlargedduality}. The remaining chain of inequalities was proven in Proposition \ref{prop:middleinequalities}.
\end{proof}


\bibliographystyle{plainnat}
\bibliography{bibfile}

\end{document}